\newif\ifsubmission
\newif\ifcomments
\newif\ifanonymous
\newif\ifshaphered
\def\@copyrightspace{\relax}
\pgfplotsset{compat=newest}
\definecolor{carmine}{RGB}{150, 0, 24}
\definecolor{forest}{RGB}{34, 139, 34}
\definecolor{blues4}{RGB}{49, 130, 189}
\pgfplotsset{grid style={dashed,gray}}
\newcommand{\xmark}{\text{\ding{55}}}
\newcommand{\remove}[1]{}
\newtheorem{theo}{\bf Theorem}
\newtheorem{lemma}{\bf Lemma}
\newcommand*{\textlabel}[2]{%
  \edef\@currentlabel{#1}% Set target label
  \phantomsection% Correct hyper reference link
  #1\label{#2}% Print and store label
}\usepackage{hyperref}
\renewcommand{\footnotesize}{\small}
\newcommand\changed[1]{{#1}}
\newcommand\changedagain[1]{{#1}}
\newcommand\ICchanged[1]{{#1}}
\newcommand\ICchangedagain[1]{{#1}}
\newcommand\revision[1]{#1}
\newcommand\ghassan[1]{\textcolor{brown}{Ghassan: #1}}
\newcommand\wenting[1]{\textcolor{purple}{Wenting: #1}}
\newcommand\jianliu[1]{\textcolor{green}{Jian Liu: #1}}
\newcommand\asokan[1]{\textcolor{orange}{Asokan: #1}}
\newcommand\TODO[1]{\textcolor{red}{TODO: #1}}
\newcommand\ghassan[1]{}
\newcommand\wenting[1]{}
\newcommand\jianliu[1]{}
\newcommand\asokan[1]{}
\newcommand\TODO[1]{}
\newcommand{\fig}[1]{Fig.~\ref{#1}}
\newcommand{\BFT}{\protect{FastBFT}\xspace}
\newcommand{\Ser}{\ensuremath{\mathcal{S}}}
\newcommand{\Cli}{\ensuremath{\mathcal{C}}}
\newcommand{\prepareP}{\textsf{prepare}\xspace}
\newcommand{\commitP}{\textsf{commit}\xspace}
\newcommand{\preprepareP}{\textsf{pre-prepare}\xspace}
\newcommand{\replyP}{\textsf{reply}\xspace}
\newcommand{\preprocessingP}{\textsf{pre-processing}\xspace}
\newcommand{\request}{\textrm{REQUEST}\xspace}
\newcommand{\reply}{\textrm{REPLY}\xspace}
\newcommand{\prepare}{\textrm{PREPARE}\xspace}
\newcommand{\commit}{\textrm{COMMIT}\xspace}
\newcommand{\reqviewchange}{\textrm{REQ-VIEW-CHANGE}\xspace}
\newcommand{\viewchange}{\textrm{VIEW-CHANGE}\xspace}
\newcommand{\newview}{\textrm{NEW-VIEW}\xspace}
\newcommand{\suspect}{\textrm{SUSPECT}\xspace}
\newcommand{\newtree}{\textrm{NEW-TREE}\xspace}
\newcommand{\rejoin}{\textrm{REJOIN}\xspace}
\newcommand{\TA}{\textrm{TEE}}
\newcommand{\E}{\textsf{E}}
\newcommand{\D}{\textsf{D}}
\newcommand{\Enc}{\textsf{Enc}}
\newcommand{\Dec}{\textsf{Dec}}
\newcommand{\Sign}{\textsf{Sign}}
\newcommand{\Verify}{\textsf{Vrfy}}
\newcommand{\cipher}{\varrho}
\begin{document}
%
% paper title
% can use linebreaks \\ within to get better formatting as desired
\title{Scalable Byzantine Consensus via Hardware-assisted Secret Sharing}

\ifanonymous
%% Submission is anonymized
\else
\author{Jian~Liu, %~\IEEEmembership{Member,~IEEE,}
            Wenting~Li,%~\IEEEmembership{Member,~IEEE,}
            Ghassan~O.~Karame,~\IEEEmembership{Member,~IEEE,}
        and~N.~Asokan,~\IEEEmembership{Fellow,~IEEE}
\IEEEcompsocitemizethanks{
\IEEEcompsocthanksitem Jian Liu and N. Asokan are with the Department
of Computer Science, Aalto University, Finland. E-mail: jian.liu@aalto.fi, asokan@acm.org \protect\\
\IEEEcompsocthanksitem Wenting Li and Ghassan O. Karame are with NEC Laboratories Europe, Germany. 
E-mail: \{wenting.li, ghassan.karame\}@neclab.eu
}
%\thanks{Manuscript received April 19, 2005; revised August 26, 2015.}
}

\remove{
\author{\IEEEauthorblockN{Jian Liu}
\IEEEauthorblockA{%School of Electrical and\\Computer Engineering\\
Aalto University, Finland\\
%Atlanta, Georgia 30332--0250\\
jian.liu@aalto.fi}
\and
\IEEEauthorblockN{Wenting Li}
\IEEEauthorblockA{NEC Laboratories Europe\\
wenting.li@neclab.eu}
\and
\IEEEauthorblockN{Ghassan O. Karame}
\IEEEauthorblockA{NEC Laboratories Europe\\
ghassan@karame.org}
\and
\IEEEauthorblockN{N. Asokan}
\IEEEauthorblockA{Aalto University, Finland\\
asokan@acm.org}
}
}

\remove{
\ifsubmission
\IEEEoverridecommandlockouts
\makeatletter\def\@IEEEpubidpullup{9\baselineskip}\makeatother
\IEEEpubid{\parbox{\columnwidth}{
}
\hspace{\columnsep}\makebox[\columnwidth]{}}
\fi
}

\markboth{\textcopyright~2018 IEEE DOI:10.1109/TC.2018.2860009}%
{Shell \MakeLowercase{\textit{et al.}}: Bare Demo of IEEEtran.cls for Computer Society Journals}

\IEEEtitleabstractindextext{%
\begin{abstract}
%!TEX root = ../submission.tex
%\subsection*{Abstract}
%\label{sec:abstr}
 The surging interest in blockchain technology has revitalized the search for effective Byzantine consensus schemes. In particular, the blockchain community has been looking for ways to effectively integrate traditional Byzantine fault-tolerant (BFT) protocols 
 into a blockchain consensus layer allowing various financial institutions to securely agree on the order of transactions.
 However, existing BFT protocols can only scale to tens of nodes due to their $O(n^2)$ message complexity.
 
In this paper, we propose \BFT, \revision{\textlabel{a fast and scalable BFT protocol}{R2(1.1)}}. At the heart of \BFT is a novel message aggregation technique that combines hardware-based trusted execution environments (TEEs) with lightweight secret sharing. 
Combining this technique with several other optimizations (i.e., optimistic execution, tree topology and failure detection),  \BFT achieves low latency and high throughput even for large scale networks.
Via systematic analysis and experiments, we demonstrate that \BFT has better scalability and performance than previous BFT protocols.

%\begin{IEEEkeywords}
%Blockchain, Consensus, Byzantine fault-tolerance, TEE, Secret sharing.
%\end{IEEEkeywords}

\end{abstract}

% Note that keywords are not normally used for peerreview papers.
\begin{IEEEkeywords}
Blockchain, Byzantine fault-tolerance, state machine replication, distributed systems, trusted component.
\end{IEEEkeywords}}

\maketitle

\ifshaphered
\newcommand{\on}[2]{{\textcolor{red}{#1}} \textcolor{blue}{#2}}
\input{shepherd-changelog}
\else
\newcommand{\on}[2]{#2}
\fi

%!TEX root = ../submission.tex
\section{Introduction}
\label{sec:intro}

%Despite decades of research, 
Byzantine fault-tolerant (BFT) protocols have not yet seen significant real-world deployment.
\revision{\textlabel{There are several potential reasons}{R2(3)} for this including the poor efficiency and scalability of current BFT protocols and, more importantly, due to the fact that often Byzantine faults are not perceived to be a major concern in well-maintained data centers.}
%due to \revision{\textlabel{the fact that there are not enough}{R2(3)} Byzantine faults in a well maintained data center}.
%to their poor efficiency and scalability.
%\ghassan{Who are ``they''? slightly confusing..I would stay away from He, It, They if I can explicitly state the subject.}
%In a system with $n$ servers (nodes), such protocols need to exchange $O(n^2)$ messages to reach consensus on a single operation~\cite{PBFT}. 
%and typically require $n \geq 3f+1$ replicas to tolerate up to $f$ Byzantine faults~\cite{Dwork:1988:CPP:42282.42283}.
%the fact that existing proposals require a large number of replicas to tolerate non-crash failures (i.e., 3$t$+1 replicas are needed in order to tolerate $t$ Byzantine failures),
%and often incur considerable computational overhead, e.g., in verifiying/issuing digital signatures.
Consequently,
existing commercial systems like those in Google~\cite{180268} and Amazon~\cite{Verbitski:2017:AAD:3035918.3056101} rely on weaker crash fault-tolerant variants (e.g., Paxos~\cite{Lamport:1998:PP:279227.279229} and Raft~\cite{Raft}).

Recent interest in blockchain technology has given fresh impetus for BFT protocols.
%\ghassan{you still need a sentence to connect the fact that BFT is not used with the blockchain paragraph. Otherwise, the first 2 paragraphs of the paper are completely independent of each other.}
%The recent success of Bitcoin has drawn attention to the potential of the \emph{blockchain},
A blockchain is a key enabler for {\em distributed consensus}, serving as a public ledger for digital currencies (e.g., Bitcoin) and other applications.
%Although Bitcoin's blockchain can tolerate non-crash faults,
Bitcoin's blockchain relies on the well-known proof-of-work (PoW) mechanism to ensure probabilistic consistency guarantees on the order and correctness of transactions.
PoW currently accounts for more than 90\% of the total market share of existing digital currencies.
\ifsubmission
\else
(e.g., Bitcoin, Litecoin, DogeCoin, Ethereum)
\fi
However, Bitcoin's PoW has been severely criticized for its
considerable waste of energy and meagre %\ghassan{cant we use a simpler more technical word?}
transaction throughput ($\sim$7 transactions per second)~\cite{DBLP:conf/ccs/GervaisKWGRC16}.

To remedy these limitations, researchers and practitioners are investigating integration of BFT protocols with blockchain consensus\ifsubmission. \else
 to enable financial institutions and supply chain management partners to agree on the order and correctness of exchanged information.
\fi
This represents the first opportunity for BFT protocols to be integrated into real-world systems. For example, IBM's Hyperledger/Fabric blockchain~\cite{Hyperledger} currently relies on PBFT~\cite{PBFT} for consensus.
\changedagain{
While PBFT can achieve higher throughput %(hundreds of operations per second)
than Bitcoin's consensus layer%(with reasonably sized blockchains)
~\cite{Fabric}, it cannot match, by far, the transactional volumes of existing payment methods (e.g., Visa handles tens of thousands of transactions per second~\cite{Visa}).
Furthermore, PBFT only scales to few tens of nodes, \revision{since it needs to exchange $O(n^2)$ messages to reach consensus on a single operation among $n$ servers~\cite{PBFT}.}}
Thus, enhancing the scalability and performance of BFT protocols is essential for ensuring their practical deployment in existing industrial blockchain solutions.

In this paper, we propose \BFT, \revision{\textlabel{a fast and scalable BFT protocol}{R2(1.2)}}. At the heart of \BFT is a novel {\em message aggregation} technique that combines hardware-based \emph{trusted execution environments} (e.g., Intel SGX) with lightweight %cryptographic primitive based on
secret sharing.
Aggregation reduces message complexity from $O(n^2)$ to $O(n)$~\cite{CoSi}.
Unlike previous schemes, message aggregation in \BFT does \emph{not} require any public-key operations (e.g.,  multisignatures),
thus incurring considerably lower  computation/communication overhead.
%This message aggregation technique is applicable to any BFT protocol that relies on \emph{trusted counters} such as MinBFT, MinZyzzyva~\cite{MinBFT}, and CheapBFT~\cite{CheapBFT}.
\BFT further balances
\ifsubmission
\else
computation and communication
\fi
load by arranging nodes in a tree topology,
so that inter-server communication and message aggregation take place along edges of the tree.
\revision{\BFT \textlabel{adopts the}{R2(5.1)} \emph{optimistic} BFT paradigm~\cite{Distler2016} that 
%separates agreement from execution, allowing it to 
only requires a subset of nodes to {\em actively} run the protocol. }
%guarantee correctness.
%\ghassan{reads funny. belongs to? why dont we say adopt the?}.
%protocol (e.g., CheapBFT~\cite{CheapBFT}) and exploits passive replication (up to $f$ passive replicas) to save resources in the absence of faults.
%We further observe that such protocols are well suited for the tree-based structure.
Finally, we use a simple \emph{failure detection} mechanism that makes it possible for FastBFT to deal with non-primary faults efficiently.

Our experiments show that, %with 1~MB payloads and 200 nodes, 
the throughput of \BFT is %at least 8 times 
significantly larger compared to other BFT protocols we evaluated~\cite{Zyzzyva, MinBFT, CheapBFT}. %  in a network comprising 200 nodes.
%With smaller payload sizes or fewer nodes, \BFT's throughput is even higher.
As the number of nodes increases, \BFT exhibits considerably slower decline in throughput compared to other BFT protocols.
This makes \BFT an ideal consensus layer candidate for next-generation blockchain systems ---
e.g., %in the aforementioned setting, 
assuming 1~MB blocks and 250 byte transaction records (as in Bitcoin), \BFT can process over 100,000 transactions per second.

%Our experiments show that  \ghassan{27 op/s is too  low to report.}\TODO{() consensus in 27 operations per second (corresponding to ) . }
%%% NA: update the figures before submission. Can we report 100 000 tx/s for n=100 and b=1MB?
%(370 operations per second when the network comprises 200 nodes).
 %The number of \BFT's throughput  slower  in throughput is slow as the number of nodes increase.
% {\color{red}Our implementation results show that \BFT can reach a peak throughput of Y operations per second, which is only Z\% slower than X.}

%For a block size of 1~MB featuring 250 byte transaction, \BFT can achieve a transactional throu%ghput of almost 200,000 transactions per second when the network comprises 100 nodes.

In \BFT, we made specific design choices as to how the building blocks (e.g., message aggregation technique, or communication topology) are selected and used.
Alternative design choices would yield different BFT variants featuring various tradeoffs between efficiency and resilience. We capture this tradeoff through a framework that compares such variants.

In summary, we make the following contributions:
\begin{itemize}
\item We propose \BFT, \revision{\textlabel{a  fast and scalable
    BFT protocol}{R2(1.3)}} (Sections~\ref{sec:overview} and \ref{sec:BFT}). %, and demonstrate its safety and liveness guarantees (Section~\ref{sec:correctness}).
\item We describe a framework that captures a set of important design choices and allows us to situate \BFT in the context of a number of possible BFT variants (both previously proposed and novel variants) (Section~\ref{sec:discussion}).
\item We present a full implementation of \BFT and a systematic performance analysis comparing \BFT with several BFT variants. Our results show that \BFT outperforms other variants in terms of efficiency (latency and throughput) and scalability (Section~\ref{sec:performance}).
\end{itemize}

%!TEX root = ../submission.tex
\section{Preliminaries}
\label{sec:Background}

%\ghassan{i would appreciate a sentence introducing the section, and telling readers what to expect here. they can then decide to skip it based on the sentence.}

In this section, we describe the problem we tackle, outline known BFT protocols and existing optimizations.
%We also describe widely available hardware security mechanisms and how they have been used in prior work to improve the effectiveness of BFT protocols.
%existing hardware security extensions and their impact on BFT protocols.

%and typically require $n \geq 3f+1$ replicas to tolerate up to $f$ Byzantine faults~\cite{Dwork:1988:CPP:42282.42283}.

\subsection{State Machine Replication (SMR)}
\label{sec:smr}

SMR~\cite{SMR} is a distributed computing primitive for implementing fault-tolerant services where the state of the system is replicated across different nodes, called ``replicas'' ($\Ser$s). Clients ($\Cli$s) send requests to  $\Ser$s, which are expected to execute the same order of requested operations (i.e., maintain a common state). %and how they change the state of the system. 
However, some  $\Ser$s may be faulty and their failure mode can be either {\em crash} or {\em Byzantine} (i.e., deviating arbitrarily from the protocol~\cite{Byzantine}).
Fault-tolerant SMR must ensure two \emph{correctness} guarantees:
\begin{itemize}
\item {\em Safety}: all non-faulty replicas execute the requests in the same order (i.e., consensus), and
\item {\em Liveness}: clients eventually receive replies to their requests.
\end{itemize}
\revision{Fischer-Lynch-Paterson (FLP) impossibility~\cite{FLP} proved that fault-tolerance {\em cannot} be deterministically achieved in an asynchronous communication model where \textlabel{no bounds on transmission delays can be assumed.}{R1(14)}}

\subsection{Practical Byzantine Fault Tolerance (PBFT)}
\label{subsec:PBFT}

For decades, researchers have been struggling to circumvent the FLP impossibility.
One approach, PBFT~\cite{PBFT}, leverage the {\em weak synchrony} assumption under which
messages are guaranteed to be delivered after a certain time bound.
%

%
%In PBFT,
One replica, the {\em primary} $\Ser_p$, decides the order for clients' requests, and forwards them to other replicas $\Ser_i$s.
Then, {\em all} replicas together run a three-phase (\preprepareP/\prepareP/\commitP) agreement protocol to agree on the order of requests. 
Each replica then processes each request and sends a response to the corresponding client. %proposed by $\Ser_p$. 
\changed{The client accepts the result only if it has received at least $f+1$ consistent replies.}
We refer to BFT protocols incorporating such message patterns (\fig{fig:PBFT}) as {\em classical} BFT.
$\Ser_p$ may become faulty: either stop processing requests (crash) or send contradictory messages to different $\Ser_i$s (Byzantine). The latter is referred to as {\em equivocation}.
On detecting that $\Ser_p$ is faulty,  $\Ser_i$s trigger a {\em view-change} to select a new primary.
The weak synchrony assumption guarantees that %if $\Ser_p$ is faulty, 
\mbox{view-change will eventually succeed}.

\subsection{Optimizing for the Common Case}

% In PBFT, when $\Ser_p$ receives a request from $\Cli$, it assigns a sequence number to it and multicasts a $\preprepare$.
% $\Ser_i$ accepts this $\preprepare$ if it has not accepted another one with the same sequence number for a different message.
% Then each $\Ser_i$ sends all other replicas a $\prepare$, which signals that $\Ser_i$ has agreed to assign the sequence number to this request.
% Each replicas collects messages until it has received $2f$ $\prepare$s (including its own).
% Next, each $\Ser_i$ multicasts a $\commit$ saying that it has been ready to execute.
% After collecting $2f+1$ $\commit$s, each $\Ser_i$ executes the operation in the request and send a reply to $\Cli$,
% who waits for $f+1$ consistent replies to accept the result.
% Following PBFT, there are huge amount of work about improving the performance of SMR under the same assumption.
% Based on their message patterns, we roughly classify them into three categorizes:

Since agreement in classical BFT is expensive, prior works have attempted to improve performance based on the fact that replicas rarely fail.
%We roughly classify these efforts in two categories:
%Since classical BFT schemes are expensive, several researchers have sought to develop BFT schemes that are very efficient in the common case where no replica fails,
%while incorporating more expensive recovery mechanisms in the case of failure.
We group these efforts into two categories:

%Since the agreement protocol of classic BFT is expensive, people try to improve the permanence of common case by leveraging the fact that replicas rarely fails.

\noindent{\bf Speculative.}
Kotla et al. present Zyzzyva~\cite{Zyzzyva} that uses speculation to improve performance.
Unlike classical BFT, $\Ser_i$s in Zyzzyva execute $\Cli$s' requests following the order proposed by $\Ser_p$, {\em without} running any explicit agreement protocol.
After execution is completed, all replicas reply %immediately 
to $\Cli$.
If $\Ser_p$ equivocates, $\Cli$ will receive inconsistent replies.
In this case,  $\Cli$ helps correct replicas to recover from their inconsistent states to a common state.
Zyzzyva can reduce the overhead of state machine replication to near optimal. %\ghassan{to near its?? what does it mean?}
%and to achieve throughputs of tens of thousands of requests per second.
We refer to BFT protocols following this message pattern as {\em speculative} BFT.

\begin{figure}[tbp]
\centering
\begin{tikzpicture}
  [scale=.7,auto=center]
   \node at (1.2, 5.5) {\scriptsize \sf request};
   \node at (3, 5.5) {\scriptsize \sf pre-prepare};
   \node at (5, 5.5) {\scriptsize \sf prepare};
    \node at (7, 5.54) {\scriptsize \sf commit};
     \node at (9, 5.5) {\scriptsize \sf reply};

   \draw [thick, dotted] (2, 0.5) -- (2, 6);
    \draw [thick, dotted] (4, 0.5) -- (4, 6);
     \draw [thick, dotted] (6, 0.5) -- (6, 6);
      \draw [thick, dotted] (8, 0.5) -- (8, 6);

   \node at (0,5) {$\Cli$};
   \draw [thick] (0.5,5) -- (10,5);
   \draw [->,shorten >=2pt,>=stealth] (0.6, 5) -- (1.9, 4);

    \node at (0,4) {$\Ser_p$};
     \draw [thick] (0.5,4) -- (10,4);
     \draw [->,shorten >=2pt,>=stealth] (2.1, 4) -- (3.9, 3);
     \draw [->,shorten >=2pt,>=stealth] (2.1, 4) -- (3.8, 2);
     \draw [->,shorten >=2pt,>=stealth] (2.1, 4) -- (3.85, 1);

     \draw [->,shorten >=2pt,>=stealth] (6.1, 4) -- (7.9, 3);
     \draw [->,shorten >=2pt,>=stealth] (6.1, 4) -- (7.6, 2);
     \draw [->,shorten >=2pt,>=stealth] (6.1, 4) -- (7.4, 1);

     \draw [->,shorten >=2pt,>=stealth] (8.1, 4) -- (9.5, 5);

   \node at (0,3) {$\Ser_1$};
   \draw [thick] (0.5,3) -- (10,3);
     \draw [->,shorten >=2pt,>=stealth] (4.1, 3) -- (5.3, 4);
     \draw [->,shorten >=2pt,>=stealth] (4.1, 3) -- (5.8, 2);
     \draw [->,shorten >=2pt,>=stealth] (4.1, 3) -- (5.5, 1);

     \draw [->,shorten >=2pt,>=stealth] (6.3, 3) -- (7.5, 4);
     \draw [->,shorten >=2pt,>=stealth] (6.3, 3) -- (7.8, 2);
     \draw [->,shorten >=2pt,>=stealth] (6.3, 3) -- (7.7, 1);

      \draw [->,shorten >=2pt,>=stealth] (8.4, 3) -- (9.6, 5);

    \node at (0,2) {$\Ser_2$};
  \draw [thick] (0.5,2) -- (10,2);
     \draw [->,shorten >=2pt,>=stealth] (4.1, 2) -- (5.5, 4);
     \draw [->,shorten >=2pt,>=stealth] (4.1, 2) -- (5.6, 3);
     \draw [->,shorten >=2pt,>=stealth] (4.1, 2) -- (5.9, 1);

     \draw [->,shorten >=2pt,>=stealth] (6.2, 2) -- (7.7, 4);
     \draw [->,shorten >=2pt,>=stealth] (6.2, 2) -- (7.6, 3);
     \draw [->,shorten >=2pt,>=stealth] (6.2, 2) -- (7.99, 1);

     \draw [->,shorten >=2pt,>=stealth] (8.2, 2) -- (9.8, 5);

    \node at (0,1) {$\Ser_3$};
    \draw [thick] (0.5,1) -- (10,1);
    \node at (1.5, 1) {$\xmark$};

\draw [decorate,decoration={brace,amplitude=5pt}]
(2,6) -- (8,6) node [black,midway,yshift=8]
{\scriptsize {\bf Agreement}};

\end{tikzpicture}
        \caption{Message pattern in PBFT.}
        \label{fig:PBFT}
\end{figure}

\noindent{\bf Optimistic.}
%Yin et al. proposed a BFT replication architecture that separates agreement (request ordering) from execution (request processing)~\cite{Yin}.
%Replicas are separated into agreement cluster and execution cluster.
\revision{\textlabel{Distler et al. proposed}{R2(5.2)} a resource-efficient BFT (ReBFT) replication architecture~\cite{Distler2016}.}
In the common case, only a subset of replicas are required to run the agreement protocol. Other replicas passively update their states and become actively involved only in case the agreement protocol fails.
We call BFT protocols following this message pattern as {\em optimistic} BFT.
Notice that such protocols are different from speculative BFT in which explicit agreement is \emph{not} required in the common case.

\subsection{Using Hardware Security Mechanisms}

Hardware security mechanisms have become widely available on commodity computing platforms.
Trusted execution environments (TEEs) are already pervasive on mobile platforms~\cite{TEE}.
%Trusted Platform Modules (TPMs) have been available in PC and server class devices for many years. 
Newer TEEs such as Intel's SGX~\cite{SGX1, SGX2} are being deployed on PCs and servers.
% The combination of these capabilities enables remote users to trust a TA.
%confidentiality and integrity protection for the data inside its memory, and ensures that no one can interfere with its operation.
TEEs provide protected memory and isolated execution so that 
the regular operating system or applications \revision{\textlabel{can neither control nor observe}{R1(15)} the data being stored or processed inside them.}
TEEs also allow remote verifiers to ascertain the current configuration and behavior of a device via \emph{remote attestation}. 
\revision{\textlabel{In other words}{R2(6)}, TEE can only crash but not be Byzantine.}

Previous work showed how to use hardware security to reduce the number of replicas and/or communication phases for BFT protocols~\cite{Correia2005, Chun07, TrInc, MinBFT, EBAWA, CheapBFT}.
For example, MinBFT~\cite{MinBFT} improves PBFT using a {\em trusted counter service} to prevent equivocation \revision{\cite{Chun07}} by faulty replicas.
Specifically, each replica's local TEE maintains a unique, monotonic and sequential counter;
each message is required to be bound to a unique counter value.
Since monotonicity of the counter is ensured by TEEs, replicas cannot assign the same counter value to different messages.
As a result, the number of required replicas is reduced from $3f+1$ to $2f+1$ (where $f$ is the maximum number of tolerable faults) and the number of communication phases is reduced from 3 to 2 (\prepareP/\commitP).
Similarly, MinZyzzyva uses TEEs to reduce the number of replicas in Zyzzyva but requires the same number of communication phases~\cite{MinBFT}.
CheapBFT~\cite{CheapBFT} uses TEEs in an optimistic BFT protocol.
In the absence of faults, CheapBFT requires only $f+1$ active replicas to agree on and execute client requests. 
%\changed{It requires all  to reply to the client, so the safety is still guarrantted.}
The other $f$ passive replicas just modify their states by processing state updates provided by the active replicas. 
In case of suspected faulty behavior, CheapBFT triggers a transition protocol to activate passive replicas, and then switches to MinBFT.

\subsection{Aggregating Messages}

%\ghassan{Now we refer to an agreement phase of a BFT protocol. it would be good to have a subsection at the start of the preliminaries section spelling out the various routines of a BFT protocol. Otherwise, one has to reconstruct the bits and pieces of a BFT protocol by reading everywhere in the paper.}

Agreement in BFT requires each $\Ser_i$ to multicast a commit message to all (active) replicas to signal that it agrees with the order proposed by $\Ser_p$.
This leads to $O(n^2)$ message complexity (\fig{fig:PBFT}).
A natural solution is to use {\em message aggregation} techniques to combine messages from multiple replicas.
By doing so, each $\Ser_i$ only needs to send and receive a single message.
For example, collective signing (CoSi)~\cite{CoSi} relies on {\em multisignatures} to aggregate messages.
It was used by ByzCoin~\cite{ByzCoin} to improve scalability of PBFT. Multisignatures allow multiple signers to produce a compact, joint signature on common input. Any verifier that holds the aggregate public key can verify the signature in constant time. However, multisignatures generally require larger message sizes and longer processing~times.

\section{\BFT Overview}
\label{sec:overview}

In this section, we give an overview of \BFT before providing a detailed specification in Section~\ref{sec:BFT}.
%\BFT~is a novel BFT protocol that guarantees safety in asynchronous networks but requires weak synchrony for liveness. 

\noindent\textbf{System model.} \BFT operates in the same setting as in Section~\ref{subsec:PBFT}: it guarantees safety in asynchronous networks but requires weak synchrony for liveness. 
We further assume that each replica holds a hardware-based TEE that maintains a monotonic counter
and a rollback-resistant memory\footnote{\scriptsize{Rollback-resistant memory can be built via monotonic counters~\cite{dirtybit}.}}.
TEEs can verify one another using remote attestation and establish secure communication channels among them~\cite{anati2013innovative}. 
We assume that faulty replicas may be Byzantine but TEEs may only crash.

%\BFT achieves its superior performance using a number of novel optimizations.

\noindent\textbf{Strawman design.} We choose the optimistic paradigm \revision{(like CheapBFT~\cite{CheapBFT})} where $f+1$ active replicas agree and execute the requests and the other $f$ passive replicas just update their states. 
The optimistic paradigm achieves a strong tradeoff between efficiency and resilience (see Section~\ref{sec:discussion}).
We use \textbf{message aggregation} \revision{(\textlabel{with one more communication step}{R2(8.1)})} to reduce message complexity to $O(n)$: during \commitP, each active replica $\Ser_i$  sends its commit message directly to the primary $\Ser_p$ instead of multicasting to all replicas. To avoid the overhead associated with message aggregation using primitives like multisignatures, we use \textbf{secret sharing} for aggregation. 
\revision{\textlabel{An essential assumption}{R1:7} of our protocol is that secrets are one-time.}
To facilitate this, we introduce an additional \preprocessingP phase in the design of \BFT.  \fig{fig:FastBFT} depicts the overall message pattern of \BFT.

First, consider the following strawman design.
During \preprocessingP, $\Ser_p$ generates a set of random secrets and publishes the cryptographic hash of each secret.
Then, $\Ser_p$ splits each secret into shares and sends one share to each active $\Ser_i$.
Later, during \prepareP,  $\Ser_p$ binds each client request to %the identifier of
a previously shared secret.
During \commitP, each active $\Ser_i$ signals its commitment by revealing its share of the secret.
$\Ser_p$ gathers all such shares to reconstruct the secret, which represents the aggregated commitment of all replicas.
$\Ser_p$ multicasts the reconstructed secret to all  active $\Ser_i$s which can verify it with respect to the corresponding hash.
During \replyP, the same approach is used to aggregate reply messages from all \revision{active} $\Ser_i$:
after verifying the secret, %aggregate commit message,
$\Ser_i$ reveals its share of the next secret to $\Ser_p$ which reconstructs the reply secret and returns it to the client as well as to all passive replicas.
Thus, the client and passive replicas only need to receive one reply instead of $f+1$.
\revision{$\Ser_p$ \textlabel{includes the two opened secrets}{R(1)9} and their hashes (which are published in the \preprocessingP phases) in the reply messages.}

\begin{figure}[tbp]
\centering
\begin{tikzpicture}
  [scale=.55,auto=center]
   \node at (0.5, 5.7) {\scriptsize \sf pre-processing};
    \node at (0.7, 5.25) {\scriptsize \sf (batched)};
   \node at (3, 5.5) {\scriptsize \sf request};
   \node at (5, 5.5) {\scriptsize \sf prepare};
    \node at (6.95, 5.5) {\scriptsize \sf commit(1)};
    \node at (8.99, 5.5) {\scriptsize \sf commit(2)};
     \node at (11, 5.5) {\scriptsize \sf reply(1)};
     \node at (13, 5.5) {\scriptsize \sf reply(2)};

   \draw [thick, dotted] (2, 0.5) -- (2, 6);
    \draw [thick, dotted] (4, 0.5) -- (4, 6);
     \draw [thick, dotted] (5.9, 0.5) -- (5.9, 6);
      \draw [thick, dotted] (7.97, 0.5) -- (7.97, 6);
      \draw [thick, dotted] (10, 0.5) -- (10, 6);
       \draw [thick, dotted] (12, 0.5) -- (12, 6);

   \node at (-1.3,5) {$\Cli$};
   \draw [thick] (-0.8,5) -- (14,5);
   \draw [->,shorten >=2pt,>=stealth] (2.1, 5) -- (3.9, 4);

    \node at (-1.3,4) {$\Ser_p$};
     \draw [thick] (-0.8,4) -- (14,4);
     \draw [->,shorten >=2pt,>=stealth] (0.6, 4) -- (1.9, 3);
     \draw [->,shorten >=2pt,>=stealth] (0.6, 4) -- (1.8, 2);

     \draw [->,shorten >=2pt,>=stealth] (4.1, 4) -- (5.8, 3);
     \draw [->,shorten >=2pt,>=stealth] (4.1, 4) -- (5.8, 2);

     \draw [->,shorten >=2pt,>=stealth] (8, 4) -- (9.8, 3);
     \draw [->,shorten >=2pt,>=stealth] (8, 4) -- (9.8, 2);

     \draw [->,shorten >=2pt,>=stealth] (12.1, 4) -- (13.8, 5);
  %   \draw [->,shorten >=2pt,>=stealth] (12.1, 4) -- (13.8, 3);
  %   \draw [->,shorten >=2pt,>=stealth] (12.1, 4) -- (13.8, 2);
     \draw [->,shorten >=2pt,>=stealth] (12.1, 4) -- (13.8, 1);

   \node at (-1.3,3) {$\Ser_1$};
   \draw [thick] (-0.8,3) -- (14,3);

   \draw [->,shorten >=2pt,>=stealth] (6.3, 3) -- (7.8, 4);
    \draw [->,shorten >=2pt,>=stealth] (10.3, 3) -- (11.9, 4);

    \node at (-1.3,2) {$\Ser_2$};
     \draw [thick] (-0.8,2) -- (14,2);

      \draw [->,shorten >=2pt,>=stealth] (6.3, 2) -- (7.9, 4);
     \draw [->,shorten >=2pt,>=stealth] (10.3, 2) -- (12.1, 4);

    \node at (-1.3,1) {$\Ser_3$};
    \node at (-1,0.6) {\scriptsize (passive)};
    \draw [thick] (-0.8,1) -- (14,1);

\end{tikzpicture}
        \caption{Message pattern in FastBFT.}
        \label{fig:FastBFT}
        \vspace{-0.8 em}
\end{figure}

\noindent\textbf{Hardware assistance.}
The strawman design is obviously insecure because $\Ser_p$, knowing the secret, can impersonate any $\Ser_i$. We fix this by making use of the TEE in each replica. The TEE in $\Ser_p$ generates secrets, splits them, and securely delivers shares to TEEs in each $\Ser_i$. During \commitP, the TEE of each $\Ser_i$ will release its share to $\Ser_i$ only if the prepare message is correct.
Notice that now $\Ser_p$ cannot reconstruct the secret without gathering enough shares from $\Ser_i$s.

Nevertheless, since secrets are generated during \preprocessingP, a faulty $\Ser_p$ can equivocate by using the same secret for different requests.
To remedy this, we have  $\Ser_p$'s TEE securely bind a secret to a counter value during \preprocessingP,
and during \prepareP, bind the request to the freshly incremented value of a TEE-resident monotonic counter.
%Specifically, the commitment is generated from a counter value and a secret together.
%The $\prepare$ message that contains the request is signed with a specific counter value, and the $\commit$ message is the share corresponding to their counter value.
This ensures that each specific secret is bound to a single request. TEEs of replicas keep track of $\Ser_p$'s latest counter value, updating their records after every successfully handled request. 
\revision{\textlabel{The key requirement here is that}{R1(3)} the TEE will neither use the same secret for different counter values nor use the same counter value for different secrets.}
To retrieve its share of a secret, $\Ser_i$ must present a %properly authenticated
prepare message with the right counter value to its local TEE.
% a prepare message with a valid counter value, which will be recorded for further usage.
%In this case,  only the primary uses a trusted counter and other replicas record the latest counter value in their $\TA$s as their ``trusted counters''.
%
%This reduces the message complexity from $O(n^2)$ to $O(n)$.

In addition to maintaining and verifying monotonic counters like existing hardware-assisted BFT protocols \revision{(\textlabel{thus}{R2(8)}, it~requires $n = 2f + 1$ replicas to tolerate $f$ (Byzantine) faults)}, \BFT also uses TEEs for generating and sharing secrets.
 
%That is, \BFT leverages functionalities from the TEE beyond trusted counters; \BFT still only involves TEEs in a limited way to ensure that it will not become a performance bottleneck.

\noindent {\bf Communication topology.} %\ghassan{Is this optimized? or is it simply improved topology?}
Even though this approach considerably reduces message complexity, $\Ser_p$ still needs to receive and aggregate $O(n)$ shares, which can be a bottleneck. To address this, we have $\Ser_p$ organize \revision{active} $\Ser_i$s into a balanced tree rooted at itself to distribute both communication and computation costs.
Shares are propagated along the tree in a bottom-up fashion: each intermediate node aggregates its children's shares together with its own; finally, $\Ser_p$ only needs to receive and aggregate \mbox{a small constant number of shares}.
%This idea was proposed by Syta et al.~\cite{CoSi} and  used by Kokoris-Kogias et al.~\cite{ByzCoin} to improve the scalability of PBFT.
% \remove{
% However, the tree topology is fragile and will significantly reduce the robustness of the classical BFT,
% because a faulty intermediate node makes its whole subtree appear to be ``faulty'', thus leading to more than $f$ ``failures''.
% %Specifically, classical BFT (without a tree) can tolerate $f$ failures for free, whereas a single faulty replica in a tree may lead to more than $f$ ``failures''.
% %In classical BFT, if a node receives no response from its children, it needs to contact its grandchildren, so on and so forth, which may incur serious time delay.
% On the other hand, optimistic and speculative BFT require all (active) replicas to commit and execute the requests.
% So their robustness does not degrade due to the use of a tree topology for communication.
% %will not get any worse. %when using tree topology. % when a faulty intermediate node make its whole subtree ``faulty''.
% }

\noindent {\bf Failure detection.}
Finally, \BFT adapts a failure detection mechanism from \cite{BChain} to tolerate non-primary faults.
Notice that a faulty node may simply crash or send a wrong share. %when its parent detects this misbehavior, a signed suspect message will be sent up the tree.
A parent node is allowed to flag its direct children (and only them) as potentially faulty, and sends a suspect message up the tree.
Upon receiving this message, $\Ser_p$ replaces the accused replica with a passive replica and puts the accuser in a leaf so that it cannot continue to accuse others.

%Their mechanism is designed for a chain a replicas, and we adapt it to the tree structure.
%Each node in the tree is allowed to
%any node can suspect its direct children and only its direct children by sending a signed suspicion message up the tree.

%the crash fault by timeout and detects the Byzantine fault by verifying the share.

%Recall that CheapBFT triggers a transition protocol to switch to MinBFT when suspected faulty replicas present.
%However, both the transition protocol and MinBFT are expensive, and MinBFT is not friendly to the tree structure.
%Instead of switching to MinBFT, we borrow the failure detection mechanism from BChain~\cite{BChain} to detect the faulty replica and replace it with a passive one.
%If $\Ser_p$ is faulty, all replicas will run a view-change operation to choose a new primary and a new set of active replicas.
%With failure detection,

%In what follows, we describe the protocols of \BFT in greater details. In Section~\ref{sec:discussion}, we further put \BFT in perspective with a number of possible BFT variants.

%!TEX root = ../submission.tex
\begin{table}[tb]
\small
\centering
\begin{tabular}{|c|c|}
\hline
\textbf{Notation} & \textbf{Description} \\ \hline

%\multicolumn{2}{|l|}{\cellcolor[HTML]{C0C0C0} {\em Entities}} \\ \hline
%\multicolumn{2}{|l|} {\bf Entities} \\ \hline

$\Cli$       & Client                      \\ \hline
$\Ser$      & Replica                     \\ \hline
%$\Ser_p$      & Primary                     \\ \hline
%$\Ser_i$      & Backup                     \\ \hline
%USIG    &   Unique Sequential Identifier Generator  \\ \hline
%\multicolumn{2}{|l|} {\em \bf Objects} \\ \hline

%\multicolumn{2}{|l|}{\cellcolor[HTML]{C0C0C0} {\em Objects}} \\ \hline
%$M$ & Request message \\ \hline
%$L$ &  Message Log \\ \hline
%$T$ &  Tree structure \\ \hline
%$\UI$	     &   Unique Identifier    \\ \hline
%$N$         &   Ephemeral nonce    \\  \hline

%\multicolumn{2}{|l|}{\em \bf Parameters} \\ \hline

%\multicolumn{2}{|l|}{\cellcolor[HTML]{C0C0C0} {\em Parameters}} \\ \hline
$n$ & Number of replicas \\ \hline
$f$ &  Number of faulty replicas \\ \hline
$p$ & Primary number  \\ \hline
$v$  & View number  \\ \hline
$c$ & Virtual counter value \\ \hline
$C$ & Hardware counter value \\ \hline
%$l$  & Secret length \\ \hline

%\multicolumn{2}{|l|}{\em \bf Cryptographic Notations} \\ \line

%\multicolumn{2}{|l|}{\cellcolor[HTML]{C0C0C0} {\em Cryptographic Notations}} \\ \hline
$H()$               &    Cryptographic hash function \\ \hline
$h$               &     Cryptographic hash             \\ \hline
$\E() / \D()$	& Authenticated encryption/decryption \\ \hline
$k$		& Key of authenticated encryption  \\ \hline
$\cipher$           & Ciphertext of authenticated encryption  \\ \hline
$\Enc()/\Dec()$	& Public-key encryption/decryption \\ \hline
$\omega$  & Ciphertext of public-key encryption  \\ \hline
$\Sign() / \Verify()$  & Signature generation / verification\\ \hline
$\langle x \rangle_{\sigma_i}$     & A Signature on $x$ by $\Ser_i$ \\ \hline
%$x$               &     Secret (signing) key            \\ \hline
%$X$               &     Public (verification) key            \\ \hline

\end{tabular}
\caption{Summary of notations}
\label{notationtable}
\vspace{-2 em}
\end{table}

\section{\BFT: Detailed Design}
\label{sec:BFT}

In this section, we provide a full description of \BFT.
We introduce notations as needed (summarized in Table~\ref{notationtable}).

\subsection{TEE-hosted Functionality}
\label{subsec:TEE}

\begin{figure}[htbp]
\begin{mdframed}[leftline=false, rightline=false, linewidth=1pt, innerleftmargin=0cm, innerrightmargin=0cm]
\begin{algorithmic}[1]
\footnotesize
\State {\bf persistent variables:}
\State \phantom{----}{\bf maintained by all replicas:}
\State\label{1.01} \phantom{----}\phantom{----}$(c_{\textit{latest}}, v)$ \Comment{latest counter value and current view number}
\State \phantom{----}{\bf maintained by primary only:}
\State\label{1.03} \phantom{----}\phantom{----}$\{\Ser_i, k_i\}$ \Comment{current active replicas and their view keys}
\State\label{1.02} \phantom{----}\phantom{----}$T$ \Comment{current tree structure}
\State \phantom{----}{\bf maintained by active replica $\Ser_i$ only:}
\State\label{1.04} \phantom{----}\phantom{----}{\bf} $k_i$   \Comment{current view key agreed with the primary}
%----------------------------
\Function{{\em be\_primary}($\{\Ser'_i\}, T'$)}{} \Comment{set $\Ser_i$ as the primary}
\State\label{1.53} $\{\Ser_i\}:=\{\Ser'_i\}$\phantom{----} $T := T'$\phantom{----} $v:=v+1$\phantom{----} \changed{$c := 0$}
\State \textbf{for} {each $\Ser_i$ in $\{\Ser_i\}$}
\State \phantom{----} $k_i \overset{\$}{\leftarrow} \{0,1\}^l$ \Comment{generate a random view key for $\Ser_i$}
\State\label{1.57} \phantom{----} $\omega_i \leftarrow \Enc(k_i)$ \Comment{encrypt $k_i$ using $\Ser_i$'s public key}
\State \Return $\{\omega_i\}$
\EndFunction
\\
%----------------------------
\Function{{\em update\_view}($\langle x, (c, v) \rangle_{\sigma_{p'}}, \omega_i $)}{}\Comment{used by $\Ser_i$}
\State {\bf if} {$\Verify(\langle x, (c, v) \rangle_{\sigma_{p'}}) = 0$} \Return ``{\em invalid signature}''
\State {\bf else if} $c\neq c_{\textit{latest}}+1$ \Return ``{\em invalid counter}''
\State {\bf else}\label{1.67} \changed{$c_{\textit{latest}} := 0$}\phantom{----} $v := v +1$
\State\label{1.68} {\bf if} $\Ser_i$ is active, $k_i \leftarrow \Dec(\omega_i)$
\EndFunction
\\
%----------------------------
\Function{{\em preprocessing}($m$)}{}\Comment{used by $\Ser_p$}
\State\textbf{for}  {$1 \leq a \leq m$}
\State\label{1.1} \phantom{----} $c := c_{\textit{latest}}+a$\phantom{----} $s_c \overset{\$}{\leftarrow} \{0,1\}^l$\phantom{----} $h_c \leftarrow H(\langle s_c, (c, v) \rangle)$ %\Comment{$v$ is the current view number kept locally}
\State\label{1.3} \phantom{----} $s_c^1 \oplus ... \oplus s_c^{f+1} \leftarrow s_c$   \Comment{randomly splits $s_c$ into shares}
\State  \phantom{----}\textbf{for} {each active replica $\Ser_i$}
\State  \phantom{----} \phantom{----} \textbf{for}  {each of $\Ser_i$'s direct children: $\Ser_j$}
\State\label{1.4} \phantom{----} \phantom{----} \phantom{----}$\hat{h}_c^j:= H(s_c^j\oplus_{k\in \phi_j}s_c^k)$ \Comment{$\phi_j$ are $\Ser_j$'s descendants}
\State\label{1.20} \phantom{----} \phantom{----} $\cipher_c^i \leftarrow \E(k_i, \langle s_c^i, ( c,v), \{\hat{h}_c^j\}, h_c\rangle)$
\State\label{1.22} \phantom{----}$\langle h_c, ( c, v)\rangle_{\sigma_{p}} \leftarrow \Sign(\langle h_c, ( c, v)\rangle)$
\State \Return $\{\langle h_c, ( c, v) \rangle_{\sigma_{p}}, \{\cipher_c^i\}_i\}_c$
\EndFunction
\\
%-----------------------------
\Function{{\em request\_counter}($x$)}{}\Comment{used by $\Ser_p$}
\State\label{1.29} $c_{\textit{latest}} := c_{\textit{latest}}+1$ 
\State $\langle x, (c_{\textit{latest}}, v)\rangle_{\sigma} \leftarrow \Sign(\langle x, (c_{\textit{latest}}, v)\rangle)$
\State \Return $\langle x, (c_{\textit{latest}}, v)\rangle_{\sigma}$
\EndFunction
\\
%-----------------------------
\Function{{\em verify\_counter}($\langle x, ( c', v') \rangle_{\sigma_{p}}, \cipher_c^i$)}{}\Comment{used by active $\Ser_i$}
\State\label{1.34} {\bf if} {$\Verify(\langle x, ( c', v') \rangle_{\sigma_{p}}) = 0$} \Return ``{\em invalid signature}''
\State\label{1.35} {\bf else if} {$ \langle s_c^i, (c'',v''), \{\hat{h}_c^j\}, h_c\rangle \leftarrow \D(\cipher_c^i)$ fail} \Return ``{\em invalid encription}''
\State\label{1.37} {\bf else if} $(c', v') \neq (c'', v'')$ \Return ``{\em invalid counter value}''
\State\label{1.38} {\bf else if} $c' \neq c_{\textit{latest}} + 1$ \Return ``{\em invalid counter value}''
\State\label{1.40} {\bf else} $c_{\textit{latest}} := c_{\textit{latest}} + 1$ and \Return $\langle s_c^i, \{\hat{h}_c^j\}, h_c\rangle$
\EndFunction
\\
%----------------------------
\Function{{\em update\_counter}($s_c, \langle h_{c}, (c, v) \rangle_{\sigma_p}$)}{}\Comment{by passive $\Ser_i$}
\State\label{1.45} {\bf if} {$\Verify(\langle h_{c}, (c, v) \rangle_{\sigma_p}) = 0$} \Return ``{\em invalid signature}''
\State {\bf else if} $c\neq c_{\textit{latest}}+1$ \Return ``{\em invalid counter}''
\State\label{1.47} {\bf else if} $H(\langle s_c, (c, v)\rangle) \neq h_c$ \Return ``{\em invalid secret}''
\State\label{1.48} {\bf else} $c_{\textit{latest}} := c_{\textit{latest}} + 1$
\EndFunction
\\
%----------------------------
\Function{{\em reset\_counter}($\{L_i, \langle H(L_{i}), (c', v') \rangle_{\sigma_i}\}$)}{}\Comment{by $\Ser_i$}
\State {\bf if} at least $f+1$ consistent $L_i$, $(c', v')$
\State\label{1.49} \phantom{----} $c_{\textit{latest}} := c'$ and $v := v'$
\EndFunction
\end{algorithmic}
\end{mdframed}
\caption{TEE-hosted functionality required by \BFT.}
\label{fig:tee}
\end{figure}

\fig{fig:tee} shows the TEE-hosted functionality required by \BFT.
Each TEE is equipped with certified keypairs to encrypt data for that TEE (using \Enc()) and to generate signatures (using \Sign()).
The primary $\Ser_p$'s TEE maintains a monotonic counter with value $c_{\textit{latest}}$; TEEs of other replicas $\Ser_i$s keep track of $c_{\textit{latest}}$ and the current view number $v$ (line~\ref{1.01}).
$\Ser_p$'s TEE also keeps track of each currently active $\Ser_i$, key $k_i$ shared with $\Ser_i$ (line~\ref{1.03}) and the tree topology $T$ for $\Ser_i$s (line~\ref{1.02}).
Active $\Ser_i$s also keep track of their $k_i$s (line~\ref{1.04}). Next, we describe each TEE function.

\noindent{\bf \em be\_primary}:
asserts a replica as primary by
setting $T$, incrementing $v$, \changed{re-initializing $c$ (line~\ref{1.53})},
and generating $k_i$ for each active $\Ser_i$'s TEE (line~\ref{1.57}).

\noindent{\bf \em update\_view}: enables all replicas to update $(c_{\textit{latest}}, v)$ (line~\ref{1.67}) and new active replicas to receive and set $k_i$ from $\Ser_p$ (line~\ref{1.68}).

\noindent{\bf \em preprocessing}:
for each preprocessed counter value $c$,
generates a secret $s_c$ together with its hash $h_c$ (line~\ref{1.1}),
$f+1$ shares of $s_c$ (line~\ref{1.3}),
and $\{\hat{h}_c^j\}$ (line~\ref{1.4}) that allows each $\Ser_i$ to verify its children's shares.
Encrypts these using authenticated encryption with each $k_i$ (line~\ref{1.20}).
Generates a signature $\sigma_{p'}$ (line~\ref{1.22}) to bind $s_c$ with \mbox{the counter value $(c,v)$}.

% are to be used by active $\Ser_i$ s to represent their commit messages.
%TEE subsequently generates

\noindent{\bf \em request\_counter}: increments $c_{\textit{latest}}$ and binds it (and $v$) to
the input $x$ by signing them (line~\ref{1.29}).

\noindent{\bf \em verify\_counter}:
receives $\langle h, ( c', v') \rangle_{\sigma_{p}}, \cipher_c^i$; verifies:
(1) validity of $\sigma_{p}$ (line~\ref{1.34}), (2) integrity of $\cipher_c^i$ (line~\ref{1.35}),
(3) whether the counter value and view number inside $\cipher_c^i$ match $(c', v')$ (line~\ref{1.37}),
and (4) whether $c'$ is equal to $c_{\textit{latest}}+1$ (line~\ref{1.38}).
Increments $c_{\textit{latest}}$ and returns $\langle s_c^i, \{\hat{h}_c^j\}, h_c\rangle$ (line~\ref{1.40}).

\noindent{\bf \em update\_counter}:
receives $s_c, \langle h_c, (c,v)\rangle_{\sigma_p}$; verifies $\sigma_{p}$, $c$ and $s_c$ (line~\ref{1.45}-\ref{1.47}).
Increments $c_{\textit{latest}}$ (line~\ref{1.48}).

\noindent{\bf \em reset\_counter}:
\revision{
\textlabel{receives at least}{R1:9} {\em (f+1)} ($L_i, (c', v')$)s; 
sets $c_{\textit{latest}}$ as $c'$ and $v$ as $v'$ (line~\ref{1.49}).
}

%TEEs also provide interfaces for view configuration.
%Specifically, the
%\end{itemize}
%If these verifications succeed, $\TA$ records $c$ as the latest counter value in its memory and reveals the result $\langle s_c^i, (c, v), \{\hat{h}_c^j\}, h_c\rangle$ to $\Ser_i$.

\subsection{Normal-case Operation}
\label{sec:normal_case}

\begin{figure}[p]
\small
\begin{mdframed}[leftline=false, rightline=false, linewidth=1pt, innerleftmargin=0cm, innerrightmargin=0cm]
\begin{algorithmic}[1]
\State \textbf{upon} invocation of PREPROCESSING at $\Ser_p$ \textbf{do}
\State\label{2} \phantom{----} $\{\langle h_c, ( c, v) \rangle_{\sigma_{p}}, \{\cipher_c^i\}_i\}_c$ $\leftarrow$ $\TA$.{\em preprocessing}($m$)
\State\label{3} \phantom{----} \textbf{for} each active $\Ser_i$, send $\{\cipher_c^i\}_c$ to $\Ser_i$
\\%--------------------------------
\State \textbf{upon} reception of $M = \langle \request, op \rangle_{\sigma_{\Cli}}$ at $\Ser_p$ \textbf{do}
\State\label{7} \phantom{----} $\langle H(M), ( c, v) \rangle_{\sigma_{p}}$ $\leftarrow$ $\TA$.{\em request\_counter}($H(M)$)
\State\label{8} \phantom{----} multicast $\langle \prepare, M, \langle H(M), ( c, v) \rangle_{\sigma_{p}}\rangle$ to active $\Ser_i$s
\\%--------------------------------
\State \textbf{upon} reception of $\langle \prepare, M, \langle H(M), ( c, v) \rangle_{\sigma_{p}}\rangle$ at $\Ser_i$ \textbf{do}
\State\label{11} \phantom{----} $\langle s_c^i,\{\hat{h}_c^j\}, h_c\rangle$ $\leftarrow$ $\TA$.{\em verify\_counter}($\langle H(M), (c, v) \rangle_{\sigma_{p}}$, $\cipher_c^i$)
\State\label{12} \phantom{----} $\hat{s}_c^i := s_c^i$
\State\label{13} \phantom{----} \textbf{if} $\Ser_i$ is a leaf node, send $s_c^i$ to its parent
\State\label{14} \phantom{----} \textbf{else}  set timers for its direct children
\\%--------------------------------
\State \textbf{upon} timeout of $\Ser_j$'s share at $\Ser_i$ \textbf{do}
\State\label{15} \phantom{----} send $\langle \suspect, \Ser_j \rangle$ to both $\Ser_p$ and $\Ser_j$'s parent
\\%--------------------------------
\State \textbf{upon} reception of $\hat{s}_c^j$ at $\Ser_i$/$\Ser_p$ \textbf{do}
\State\label{27} \phantom{----}  \textbf{if} $H(\hat{s}_c^j) = \hat{h}_c^j$, $\hat{s}_c^i := \hat{s}_c^i \oplus \hat{s}_c^j$
\State\label{28} \phantom{----}  \textbf{else} send $\langle \suspect, \Ser_j \rangle$ $\Ser_p$  
\State\label{29} \phantom{----}\phantom{----}  \textbf{if} $i\neq p$, send to its parent
\State\label{30} \phantom{----}  \textbf{if} $\Ser_i$ has received all valid $\{\hat{s}_c^j\}_j$, send $\hat{s}_c^i$ to its parent
\State\label{31} \phantom{----}  \textbf{if} $\Ser_p$ has received all valid $\{\hat{s}_c^j\}_j$
\State\label{32} \phantom{----}\phantom{----} \textbf{if} \revision{$s_c$ \textlabel{is used for the commit phase}{R1(18.1)}}
\State\label{44} \phantom{----}\phantom{----}\phantom{----}$res$ $\leftarrow$ execute $op$ \phantom{--} $x \leftarrow H(M||res)$
\State\label{45} \phantom{----}\phantom{----}\phantom{---}$\langle x, (c+1, v) \rangle_{\sigma_{p}}$ $\leftarrow$ $\TA$.{\em request\_counter}($x$)
\State\label{46} \phantom{----}\phantom{----}\phantom{---} send active $\Ser_i$s $\langle \commit, s_c, res, \langle x, (c+1, v) \rangle_{\sigma_{p}}\rangle$
\State \phantom{----}\phantom{----}  \textbf{else if} \revision{$s_c$ is used for the reply phase}
\State\label{48} \phantom{----}\phantom{----}\phantom{---} send 
\revision{$\langle \reply, \textlabel{M, res, s_{c-1}, s_{c}}{r1:1},$} 
\revision{$\langle h_{c-1}, (c-1,v) \rangle_{\sigma_{p}},$} 
\revision{$\langle h_{c}, (c,v) \rangle_{\sigma_{p}},$}. 
$\langle H(M), $ $ (c-1,v) \rangle_{\sigma_{p}},$ 
$\langle H(M||res), (c, v)\rangle_{\sigma_{p}}\rangle$ 
to $\Cli$ and passive replicas.
\\%--------------------------------
\State \textbf{upon} reception of $\langle \suspect, \Ser_k \rangle$ from $\Ser_j$ at $\Ser_i$ \textbf{do}
\State\label{20}  \phantom{----}  \textbf{if} $i = p$
\State\label{21}  \phantom{----}\phantom{--} generate new tree $T'$ replacing $S_k$ with a passive replica and placing $S_j$ at a leaf.
\State\label{22}  \phantom{----}\phantom{--} $\langle H(T||T'), (c, v)\rangle_{\sigma_{p}} \rangle$ $\leftarrow$ TEE.{\em request\_counter}($H(T||T')$)
\State\label{23}  \phantom{----}\phantom{--} broadcast $\langle \newtree, T, T', \langle H(T||T'), (c, v)\rangle_{\sigma_{p}} \rangle$
\State\label{24}  \phantom{---} \textbf{else} cancel $\Ser_j$'s timer and forward the $\suspect$ message up
\\%--------------------------------
\State \textbf{upon} reception of $\langle \commit, s_{c}, res, \langle H(M||res), (c+1, v) \rangle_{\sigma_{p}}\rangle$ at $\Ser_i$ \textbf{do}
\State\label{51} \phantom{----} {\bf if} $H(s_{c}) \neq h_{c}$ {\bf or} execute $op$ $\neq res$
\State\label{52} \phantom{----}\phantom{--} broadcast $\langle \reqviewchange,  v, v' \rangle$
\State\label{53} \phantom{----}\phantom{--}  
$\langle s_{c+1}^i, \{\hat{h}_{c+1}^j\}, h_{c+1}\rangle$ 
$\leftarrow$ 
$\TA$.{\em verify\_counter} 
($\langle H(M||res), $ 
$(c+1,v) \rangle_{\sigma_{p}}$, 
$\cipher_c^i$)
 
\State\label{55} \phantom{----}  \textbf{if} $\Ser_i$ is a leaf node, send $s_{c+1}^i$ to its parent
\State\label{56} \phantom{----} \textbf{else} $\hat{s}_{c+1}^i := s_{c+1}^i$, set timers for its direct children
\\%--------------------------------
\State \textbf{upon} \textlabel{reception of}{r1:2} \revision{$\langle \reply, M, res, s_{c}, s_{c+1}, \langle h_{c}, (c,v) \rangle_{\sigma_{p}},  \langle h_{c+1},$ $(c+1,v) \rangle_{\sigma_{p}},$}
$\langle H(M), (c,v) \rangle_{\sigma_{p}}, \langle H(M||res), (c+1, v)\rangle_{\sigma_{p}}\rangle$ at $\Ser_i$ \textbf{do}
\State\label{59} \phantom{----} {\bf if} $H(s_c) \neq h_c$ {\bf or} $H(s_{c+1}) \neq h_{c+1} $
\State \phantom{----}\phantom{----}  multicasts $\langle \reqviewchange,  v, v' \rangle$
\State\label{60} \phantom{----} {\bf else} update state based on $res$
\State\label{61} \phantom{----}\phantom{---} TEE.{\em update\_counter}($s_{c},\langle h_{c}, (c,v) \rangle_{\sigma_{p}}$)
\State\label{62} \phantom{----}\phantom{---} TEE.{\em update\_counter}($s_{c+1},\langle h_{c+1}, (c+1,v) \rangle_{\sigma_{p}}$)

\end{algorithmic}
\end{mdframed}
\caption{Pseudocode: normal-case operation with failure detection.}
\label{fig:normal_case}
\end{figure}

%{\color{red}you need two lines recalling basic notations. Otherwise, this part cannot be easily followed by reviewers. Remind people who is $\Ser_p$ and $\Cli$}

Now we describe the normal-case operation of a replica as a reactive system (\fig{fig:normal_case}). For the sake of brevity, we do not explicitly show signature verifications and we assume that each replica verifies any signature received as input.
%shows the pseudocode corresponding to the normal case operation of \BFT with failure detection.
%In what follows, we explain how these routines are executed in \BFT.

\noindent\textbf{Preprocessing.}
$\Ser_p$ decides the number of preprocessed counter values (say $m$),
and invokes {\em preprocessing} on its TEE (line~\ref{2}).
$\Ser_p$ then sends the resulting package $\{\cipher_c^i\}_c$ to each $\Ser_i$ (line~\ref{3}).

\noindent\textbf{Request.}
A client $\Cli$~requests execution of $op$ by sending a signed request $M = \langle \request, op \rangle_{\sigma_{\Cli}}$ to $\Ser_p$. %where $\sigma_{\Cli}$ is $\Cli$'s signature.
If $\Cli$ receives no reply before a timeout,  it broadcasts\footnote{\scriptsize{We use the term ``broadcast'' when a message is sent to all replicas, and ``multicast'' when it is sent to a subset of replicas.}} $M$.

%\item \textbf{Prepare:}
\noindent\textbf{Prepare.}
Upon receiving $M$, $\Ser_p$
%verifies $\Cli$'s signature first (line~\ref{6}), and
invokes {\em request\_counter} with $H(M)$ to get a signature binding $M$ to $(c,v)$ (line~\ref{7}).
$\Ser_p$ multicasts $\langle \prepare, M, \langle H(M), ( c, v) \rangle_{\sigma_{p}}\rangle$ to all active $\Ser_i$s (line~\ref{8}).
\ifsubmission
\else
This can be achieved either by sending the message along the tree or by using direct multicast, depending on the underlying topology.
\fi
%$\langle H(M), (c, v)\rangle_{\sigma_{p}}$ is returned by $\Ser_p$'s local $\TA$, where $c$ is the current counter value.
At this point, the request $M$ is {\em prepared}.

%\item \textbf{Commit:}
\noindent\textbf{Commit.}
Upon receiving the $\prepare$ message,
each $\Ser_i$
%verifies the signature generated by $\Ser_p$'s TEE\footnote{Replicas will always verify the signature when they receive a message (except the share). We do not explicitly mention this in the paper.} and then
invokes {\em verify\_counter} with $\langle H(M), (c, v) \rangle_{\sigma_{p}}$ and the corresponding $\cipher_c^i$, and receives $\langle s_c^i, \{\hat{h}_c^j\}, h_c\rangle$ as output (line~\ref{11}).

If $\Ser_i$ is a leaf node, it sends $s_c^i$ to its parent (line~\ref{13}).
Otherwise, $\Ser_i$ waits to receive a partial aggregate share $\hat{s}_c^j$ from each of its immediate children $\Ser_j$ and verifies if $H(\hat{s}_c^j) = \hat{h}_c^j$ (line~\ref{27}).
If this verification succeeds, $\Ser_i$ computes $\hat{s}_c^i = s_c^i \oplus_{j\in \phi_i} \hat{s}_c^j$ where $\phi_i$ is the set of $\Ser_i$'s children (line~\ref{30}).

Upon reconstructing the secret $s_c$, $\Ser_p$ executes $op$ to obtain $res$ (line~\ref{44}), and
multicasts $\langle \commit, s_c, res, $ $\langle H(M||res), $ $(c+1, v)\rangle_{\sigma_{p}}\rangle$ %\footnote{The counter value here need not be $c+1$, since $\Ser_p$ may process other requests concurrently while waiting for shares of $s_c$. We use $c+1$ for simplicity.}
to all active $\Ser_i$s (line~\ref{46})\footnote{\scriptsize{In case the execution of $op$ takes long, $\Ser_p$ can multicast $s_c$ first and multicast the \commit message when execution completes.}}.
At this point, $M$ is {\em committed}.

%\item \textbf{Reply:}
%\noindent\textbf{Execute:}

%Upon receiving a valid $s^j$, each active server $\Ser_i$ sends
%a reply $E(k_i^{\Cli}, \langle \reply, res  \rangle)$ to $\Cli$ and a state update $\langle \update, res,  s^j  \rangle$ to all passive servers.

%\item \textbf{Update:}
\noindent\textbf{Reply.} %\ghassan{This part is hard to follow/parse. Could you please go over it again?}
Upon receiving the $\commit$ message, each active $\Ser_i$ verifies $s_c$ against $h_c$,
and executes $op$ to acquire the result $res$ (line~\ref{51}).
$\Ser_i$ then executes a procedure similar to \commitP to open $s_{c+1}$ (line~\ref{53}-\ref{56}).
$\Ser_p$ sends \revision{$\langle \reply, M, res, s_c, s_{c+1},\langle h_c, (c,v) \rangle_{\sigma_{p}}, \langle h_{c+1}, (c+1,v) \rangle_{\sigma_{p}},\langle H(M), (c,v) \rangle_{\sigma_{p}},\langle H(M||res),$  $(c+1, v)\rangle_{\sigma_{p}}\rangle$}
to $\Cli$ as well as to all passive replicas%\footnote{\scriptsize{The $\reply$ message for $\Cli$ need not include $M$.}}
(line~\ref{48}).
At this point $M$ has been {\em replied}.
$\Cli$ verifies the validity of this message:

\begin{enumerate}
\item A valid $\langle h_c, (c, v) \rangle_{\sigma_{p}}$ implies that $(c,v)$ was bound to a secret $s_c$ whose hash is $h_c$.
\revision{\textlabel{This implication holds only if $s_c$ is not reused}{R1(10)}, which is an invariant that our protocol ensures}
\item A valid $\langle H(M), (c,v) \rangle_{\sigma_{p}}$ implies that $(c,v)$ was bound to the request message $M$.
\item Thus, $M$ was bound to $s_c$ based on 1) and 2).
\item A valid $s_c$ (i.e., $H(s_c, (c, v)) = h_c$) implies that all active $\Ser_i$s have agreed to execute $op$ with counter value $c$.
\item A valid $s_{c+1}$ implies that all active $\Ser_i$s have executed $op$, which yields $res$.
\end{enumerate}
Each passive replica performs this verification, updates its state (line~\ref{60}),
and transfers the signed counter values to its local $\TA$ to update the latest counter value (line~\ref{61}-\ref{62}).

\revision{A communication structure for the $\commitP$/$\replyP$ phase is shown in Figure~\ref{fig:structure}.}

\begin{figure}[tb]
\begin{tikzpicture}
  [scale=.75,auto=center]
  \node (n0) at (8,8)  {$\Ser_p$};
  \node at (5.5,7)  {$\hat{s}^1_c:=\hat{s}^2_c\oplus \hat{s}^3_c$};
  \node (n1) at (6,6)  {$\Ser_1$};
  \node at (3.5,5)  {$\hat{s}^2_c:=s^4_c\oplus s^5_c$};
  \node (n2) at (4,4)  {$\Ser_2$};
  %\node (n3) at (6,4)  {$E(k_{F_2}, F_2)$};
   \node at (6.7,4.8)  {$\hat{s}^3_c$};
  \node (n4) at (8,4)  {$\Ser_3$};
   \node at (2.5,3)  {$s^4_c$};
  \node (n5) at (2,2)  {$\Ser_4$};
 % \node (n6) at (4,2)  {\Cli};
  \node (n7) at (6,2)  {$\Ser_5$};
  \node at (4.6,2.88)  {$s^5_c$};
 % \node (n8) at (0,0)  {$\mathcal{C}_1$};
  %\node (n9) at (2,0)  {$\mathcal{C}_2$};
  %\node (n10) at(4,0)  {...};
  %\node (n11) at(8,6)  {$sh_2$};
  \node (n12) at(10,6)  {...};
  \foreach \from/\to in {n1/n2,n1/n4,n2/n5,n2/n7,n0/n1,n0/n12}
  	\draw  [->,shorten >=2pt,>=stealth]  (\to) --   (\from);
\end{tikzpicture}
        \caption{Communication structure for the $\commitP$/$\replyP$ phase.}
        \label{fig:structure}
\end{figure}

\subsection{Failure Detection}
\label{sec:failure_detection}

\changedagain{
Unlike classical BFT protocols which can tolerate non-primary faults for free,
optimistic BFT protocols usually require transitions~\cite{CheapBFT} or view-changes~\cite{XFT}.
To tolerate non-primary faults in a more efficient way, \BFT leverages an efficient failure detection mechanism.}

\changedagain{Similar to previous BFT protocols~\cite{PBFT, MinBFT}, we rely on timeouts to detect crash failures and we have parent nodes detect their children's failures by verifying shares.}
Specifically, upon receiving a $\prepare$ message, $\Ser_i$ starts a timer for each of its direct children (\fig{fig:normal_case}, line~\ref{14}).
If $\Ser_i$ fails to receive a share from $\Ser_j$ before the timer expires (line~\ref{15}) or if $\Ser_i$ receives a wrong share that does not match $\hat{h}_c^j$ (line~\ref{28}),
it sends $\langle \suspect, \Ser_j \rangle$ to its parent and $\Ser_p$ to signal potential failure of $\Ser_j$.
%Otherwise, it just cancels the timer for $\Ser_j$.
Whenever a replica receives a $\suspect$ message from its child,
it cancels the timer of this child to reduce the number of $\suspect$ messages,
and forwards this $\suspect$ message to its parent along the tree until it reaches the root $\Ser_p$ (line~\ref{24}).
For multiple $\suspect$ messages along the same path, $\Ser_p$ only handles the node that is closest \mbox{to the leaf}.

Upon receiving $\suspect$,
$\Ser_p$ broadcasts $\langle \newtree,$  $T, T', $ $\langle H(T||T'), (c, v)\rangle_{\sigma_{p}} \rangle$ (line~\ref{23}),
where $T$ is the old tree and $T'$ the new tree.
$\Ser_p$ replaces the accused replica $\Ser_j$ with a randomly chosen passive replica
and moves the accuser $\Ser_i$ to a leaf position to prevent the impact of a faulty accuser continuing to incorrectly report other replicas as faulty.
%constantly accusing others if it is faulty.
\textlabel{Notice that this allows a Byzantine}{R2(10)} $\Ser_p$ to evict correct replicas.
However, there will always be at least one correct replica among the $f+1$ active replicas. %{\color{red}please explain why.}
\changedagain{Notice that $\Ser_j$ might be replaced by a passive replica if it did not receive a \prepare/\commit message and thus failed to provide a correct share.
In this case, its local counter value will be smaller than that of other correct replicas.
To rejoin the protocol, $\Ser_j$ can ask $\Ser_p$ for the PREPARE/COMMIT messages to update its counter.}

If there are multiple faulty nodes along the same path, the above approach can only detect one of them within one round.
We can extend this approach by having $\Ser_p$ check correctness of all active replicas individually after one failure detection to allow detection of multiple \mbox{failures within one round.}
%However, it becomes trickier if there are multiple Byzantine faulty nodes along the same path.

\changedagain{\textlabel{Notice that $f$ faulty replicas}{R1(2)} can take advantage of the failure detection mechanism to trigger a sequence of tree reconstructions (i.e., cause a denial of service DoS attack). After the number of detected non-primary failures exceed a threshold, $\Ser_p$ can trigger a transition protocol~\cite{CheapBFT} to fall back to a classical BFT protocol (cf. Section~\ref{sec:minBFT}).}

\subsection{View-change}
\label{sec:view-change}

\changed{Recall that $\Cli$ sets a timer after sending a request to $\Ser_p$. It will broadcast the request \revision{to all replicas} if no reply was received before the timeout.
If a replica receives no \prepare (or \commit/\reply) message before the timeout,
it will initialize a view-change (\fig{fig:view_change}) by broadcasting a $\langle \reqviewchange, L, \langle H(L), $ $(c, v) \rangle_{\sigma_{i}} \rangle$ message,
where $L$ is the message log  that includes all %$\prepare$, $\commit$, $\reply$, $\reqviewchange$, $\newview$ and $\viewchange$
messages it has received/sent since the latest checkpoint\footnote{\scriptsize{Similar to other BFT protocols, \BFT generates checkpoints periodically to limit the number of messages in the log.}}.}
In addition, replicas can also suspect that $\Ser_p$ is faulty by verifying the messages they received and initialize a view-change (i.e., line~\ref{11}, line~\ref{51},~\ref{59} in \fig{fig:normal_case}).
\revision{\textlabel{Notice that passive replicas}{R1(12)} can also send $\reqviewchange$ messages. Thus, if faulty primary occurs, there will be always $f+1$ non-faulty replicas initiate the view-change.}

\changedagain{
Upon receiving $f+1$ $\reqviewchange$ messages, the new primary $\Ser_{p'}$ (that satisfies $p' = v'$ mod $n$)
%which corresponds to the active replica with the lowest id apart from $\Ser_p$,
constructs the execution history $O$ by collecting all prepared/committed/replied requests from the message logs (line~\ref{2.6}).
Notice that there might be an existing valid execution history in the message logs due to previously failed view-changes.
In this case, $\Ser_{p'}$ just uses that history.
This strategy guarantees that replicas will always process the same execution history.}
$\Ser_{p'}$ also constructs a tree $T'$ that specifies $f+1$ new active replicas for view $v'$ (line~\ref{2.7}).
Then, it invokes $be\_primary$ on its TEE to record $T'$ and generate a set of shared view keys for the new active replicas' TEEs (line~\ref{2.8}).
%\changed{We use a new independent random counter value for each view to make sure that the old primary $\Ser_p$ who fails to deliver the message within a timeout (but its counter has been incremented) can still join the protocol for the next view.}
Next, $\Ser_{p'}$ broadcasts $\langle \newview, O, T', \langle H({O}||T'), (c+1, v) \rangle_{\sigma_{p'}}, \{\omega_i\} \rangle$ (line~\ref{2.10}).

\changed{
Upon receiving a $\newview$ message from $\Ser_{p'}$, $\Ser_i$ verifies whether ${O}$ was constructed properly,
and broadcasts $\langle \viewchange,$ $ \langle H(O||T'), (c+1,v) \rangle_{\sigma_{i}} \rangle$ (line~\ref{2.13.2}).}
Upon receiving $f$ \viewchange messages\footnote{\scriptsize{$\Ser_{p'}$ uses \newview to represent its \viewchange message, so it is actually $f+1$ \viewchange messages.}},
$\Ser_i$ executes all requests in ${O}$ that have not yet been executed locally, following the counter values (line~\ref{2.13.3}).
%\changed{Notice that $f$ \viewchange messages to \newview is what a \commit message to \prepare in the normal-case operation.
%So it is safe for $\Ser_i$ to execute all requests in ${L}$.}{\color{red}I dont understand this last sentence}
\changedagain{A valid \newview message and $f$ valid \viewchange messages represent that $f+1$ replicas have committed to execute the requests in $O$.}
%\changed{Recall that a \commit message represents that $f+1$ active replicas have agreed to execute a request with a specific counter value
%and a \reply message represents that they have executed it already.
%So it is clearly safe for $\Ser_i$ to execute the requests that have been committed or replied.
%For prepared requests, it is also safe to execute them since receiving $f+1$ consistent $\viewchange$ messages means that  $f+1$ replicas have agreed to execute them with a specific order.}
After execution, $\Ser_i$ begins the new view by invoking $update\_view$ on its local TEE (line~\ref{2.14}).

%At this point, $\Ser_i$ stops accepting messages for $v$.
%Upon receiving $f+1$ $\viewchange$ messages,

The new set of active replicas run the preprocessing phase for view $v'$, reply to the requests that have not been yet replied, and process the requests that have \mbox{not yet been prepared.}

\revision{\textlabel{The view-change protocol potentially leads to}{R1:4}  counters out of sync.
Suppose there is a quorum $Q$ of less than $f+1$ replicas receive no message after a \prepare message with a counter value $(c, v)$, 
they will keep sending a \reqviewchange with a counter value $(c+1, v)$. 
On the other hand, there is a quorum $Q'$ of at least $f+1$ replicas are still in the normal-operation and keep increasing their counters, 
$(c+1, v), (c+2, v), ..., (c+x, v)$. 
In this case, the replicas in $Q$ cannot rejoin $Q'$ because their counter values are out of sync, 
but the safety and liveness are still hold as long as the replicas in $Q'$ follow the protocol. 
Next, consider some replicas in $Q'$ misbehave and other replicas initiate a \viewchange by sending  
\reqviewchange with $(c+x+1, v)$. 
Now, there will be more than $f+1$ \reqviewchange messages and the view-change will happen. 
The honest replicas in $Q$ will execute the operations up to $(c+x+1, v)$ based on the execution history sent by the replicas in $Q'$. 
Then, all replicas will switch to a new view with a new counter value $(0, v+1)$. 
}

\begin{figure}[htbp]
\begin{mdframed}[leftline=false, rightline=false, linewidth=1pt, innerleftmargin=0cm, innerrightmargin=0cm]
\begin{algorithmic}[1]
\small
%\State \textbf{upon} timeout of $\Ser_p$'s $\prepare$ message at $\Ser_i$ \textbf{do}
%\State \phantom{----} send $\langle \reqviewchange,  v, v' \rangle$ to all replicas
%\\
\State \textbf{upon} reception of $f+1$ $\langle \reqviewchange, L, $ $\langle H(L), (c, v) \rangle_{\sigma_{i}} \rangle$ messages at the new primary $\Ser_p'$ \textbf{do}
\State\label{2.6} \phantom{--} build execution history $O$ based on message logs $\{L\}$
\State\label{2.7} \phantom{--} choose $f+1$ new active replicas and construct a tree $T'$
\State\label{2.9} \phantom{-} $\langle H(O||T'), ( c+1, v) \rangle_{\sigma_{p'}}$ $\leftarrow$ $\TA$.{\em request\_counter}($H(O||T')$)
\State\label{2.8} \phantom{--} $\{\omega_i\}$ $\leftarrow$ $\TA$.{\em be\_primary}($\{\Ser_i\}, T'$)
\State\label{2.10} \phantom{-} broadcast $\langle \newview, O, T', \langle H(O||T'), $ 
$(c+1, v) \rangle_{\sigma_{p'}}, $ 
$\{\omega_i\} \rangle$
\\%--------------------
\State \textbf{upon} reception of $\langle \newview, O, T', \langle H(O||T'), (c+1, v) \rangle_{\sigma_{p'}},$ $\{\omega_i\} \rangle$ at $\Ser_{i}$ \textbf{do}
%\State \phantom{----} \textbf{if} $\hat{L}$ was constructed \textbf{do}
\State\label{2.13} \phantom{----} {\bf if} $O$ is valid
\State\label{2.13.1} \phantom{---}\phantom{--} $\langle H(O||T'), (c+1, v) \rangle_{\sigma_{i}}$ $\leftarrow$ $\TA$.{\em request\_counter}
$($ $H(O||T'))$
\State\label{2.13.2} \phantom{----}\phantom{----} broadcast $\langle \viewchange, \langle H(O||T'), (c+1,v) \rangle_{\sigma_{i}} \rangle$

%\State\label{2.2} \phantom{----} $\langle H(L), (c, v) \rangle_{\sigma_{p}}$ $\leftarrow$ $\TA$.{\em request\_counter}($H(L)$)
%\Comment{$L$ is the message log that includes all $\prepare$, $\commit$ and $\reply$ messages it has received since the latest checkpoint. }
%\State\label{2.3} \phantom{----} broadcast $\langle \viewchange, L, \langle H(L), (c,v) \rangle_{\sigma_{i}} \rangle$
\\%--------------------
\State \textbf{upon} reception of $f$  $\langle \viewchange, \langle H(O||T'), (c+1,v) \rangle_{\sigma_{i}} \rangle$ messages at $\Ser_{i}$ \textbf{do}
%, and execute the operations that have not yet been executed
\State\label{2.13.3} \phantom{----} execute the requests in $O$ that have not been executed %following the counter values in $L$.
\State\label{2.13.4} \phantom{----} extract and store information from $T'$
\State\label{2.14} \phantom{----} $\TA$.{\em update\_view}($\langle H(O||T'), (c+1, v) \rangle_{\sigma_{p'}}, \omega_i \rangle$)
\end{algorithmic}
\end{mdframed}
\caption{Pseudocode: view-change.}
\label{fig:view_change}
\vspace{-0.18em}
\end{figure}

\remove{
\begin{figure}[htbp]
\begin{mdframed}[leftline=false, rightline=false, linewidth=1pt, innerleftmargin=0cm, innerrightmargin=0cm]
\begin{algorithmic}[1]
\footnotesize

%\State \textbf{upon} timeout of $\Ser_p$'s $\prepare$ message at $\Ser_i$ \textbf{do}
%\State \phantom{----} send $\langle \reqviewchange,  v, v' \rangle$ to all replicas
%\\
\State \textbf{upon} reception of $f+1$ $\langle \reqviewchange,  v, v' \rangle$ messages at the new primary $\Ser_{p'}$ \textbf{do}
\State\label{2.2} \phantom{----} $\langle H(L), (c, v) \rangle_{\sigma_{p}}$ $\leftarrow$ $\TA$.{\em request\_counter}($H(L)$)
\Comment{$L$ is the message log that includes all $\prepare$, $\commit$ and $\reply$ messages it has received since the latest checkpoint. }
\State\label{2.3} \phantom{----} broadcast $\langle \viewchange, L, \langle H(L), (c,v) \rangle_{\sigma_{i}} \rangle$
\\%--------------------
\State \textbf{upon} reception of $f+1$ $\langle \viewchange, L, \langle H(L), (c,v) \rangle_{\sigma_{i}} \rangle$ messages  \textbf{do}
\State\label{2.6} \phantom{----} $\hat{L} \leftarrow L_1\cup...\cup L_{f+1}$
\State\label{2.7} \phantom{----} choose $f+1$ new active replicas $\{\Ser_i\}$ and construct a new tree $T'$
\State\label{2.8} \phantom{----} $\{\omega_i\}$ $\leftarrow$ $\TA$.{\em be\_primary}($\{\Ser_i\}, T', v'$)
\State\label{2.9} \phantom{----} $\langle H(\hat{L}||T'), ( c+1, v') \rangle_{\sigma_{p'}}$ $\leftarrow$ $\TA$.{\em request\_counter}($H(\hat{L}||T')$)
\State\label{2.10} \phantom{----} broadcast $\langle \newview, \hat{L}, T', \langle H(\hat{L}||T'), (c+1, v') \rangle_{\sigma_{p'}}, \{\omega_i\} \rangle$
\\%--------------------
\State \textbf{upon} reception of $\langle \newview, \hat{L}, T', \langle H(\hat{L}||T'), (c+1, v') \rangle_{\sigma_{p'}}, $ $\{\omega_i\} \rangle$ at $\Ser_{p'}$ \textbf{do}
%\State \phantom{----} \textbf{if} $\hat{L}$ was constructed \textbf{do}
\State\label{2.13} \phantom{----} verify $\hat{L}$, and execute the operations that have not yet been executed
\State\label{2.13.1} \phantom{----} extract and store parent and children information from $T'$
\State\label{2.14} \phantom{----} $\TA$.{\em update\_view}($\langle H(\hat{L}||T'), (c+1, v') \rangle_{\sigma_{p'}}, \omega_i \rangle$)
\end{algorithmic}
\end{mdframed}
\caption{Pseudocode for view-change.}
\label{fig:view_change}
\end{figure}
}

\subsection{\changedagain{Fallback Protocol: classical BFT with message aggregation}}
%\subsection{Message aggregation for classical BFT}
\label{sec:minBFT}
%{\color{red}hmm, this should be moved directly after failure detection. We should connect it better as we spoke.}
%In the previous paragraphs, we described the protocol specification of FastBFT assuming the optimistic paradigm.
%In what follows, we demonstrate that

As we mentioned in Section~\ref{sec:failure_detection}, after a threshold number of failure detections, $\Ser_p$ initiates a {\em transition protocol}, which is exactly the same as the view-change protocol in Section~\ref{sec:view-change}, 
to reach a consensus on the current state and switch to the next ``view'' without changing the primary.
Next, all replicas run the following classical BFT as fallback instead of running the normal-case operation.
Given that permanent faults are rare, \BFT stays in this fallback mode for a fixed duration 
after which it will attempt to transition back to normal-case.
\revision{\textlabel{Before switching back to normal-case operation}{R1(11)}, $\Ser_p$ check replicas' states by broadcasting a message and asking for responses. In this way, $\Ser_p$ can avoid choosing crashed replicas to be active.
Then, $\Ser_p$ initiates a protocol that is similar to view-change but set itself as the primary.
If all $f+1$ potential active replicas participate in the view change protocol, they will successfully switch back to the normal-case operation.
}

\changedagain{To this end, we propose a new classical BFT protocol which combines the use of MinBFT with our hardware-assisted message aggregation technique.}
Unlike speculative or optimistic BFT where all (active) replicas are required to commit and/or reply, 
classical BFT only requires a subset (e.g., $f+1$ out of $2f+1$) replicas to commit and reply.
When applying our techniques to classical BFT, one needs to use a ($f+1$)-out-of-$(2f+1)$ secret sharing technique,
such as Shamir's polynomial-based secret sharing, rather than the XOR-based secret sharing.
In MinBFT, $\Ser_p$ broadcasts a $\prepare$ message including a monotonic counter value.
Then, each $\Ser_i$ broadcasts a $\commit$ message to others to agree on the proposal from $\Ser_p$.
To get rid of all-to-all multicast, we again introduce a preprocessing phase,
where $\Ser_p$'s local $\TA$ first generates $n$ random shares $x_1, ..., x_n$, and for each $x_i$, computes $\{\frac{x_j}{x_j-x_i}\}_j$ together with  $(x_i^2, ..., x_i^f)$. %$y_i =
Then, for each counter value $c$, $\Ser_p$ performs the following operations:
\begin{enumerate}

\item $\Ser_p$ generates a polynomial with independent random coefficients: $f_c(x) = s_c + a_{1,c}x^1 + ... + a_{f,c}x^f$ where $s_c$ is a secret to be shared.

\item $\Ser_p$ calculates $h_c \leftarrow H(s_c, (c,v))$.

\item For each active $\Ser_i$, $\Ser_p$ calculates $\cipher_c^i$ = $E(k_i, \langle (x_i, f_c(x_i)), $ $(c,v), h_c\rangle)$.
%Suppose $\Ser_j$ is one of $\Ser_i$'s direct children, $\hat{h}_c^j$ is calculated as $\hat{h}_c^j= H(f_c(x_j)y_j+\sum_{k\in \phi_j}f_c(x_k)y_k)$,
%where $\phi_j$ is the set of $\Ser_j$'s descendants.

\item $\Ser_p$ invokes its TEE to compute $\langle h_c, (c, v)\rangle_{\sigma_{p}}$ which is a signature generated using the signing key inside $\TA$.

\item $\Ser_p$ gives $\langle h_c, (c, v)\rangle_{\sigma_{p}}$ and $\{\cipher_c^i\}$ to $\Ser_p$.
\end{enumerate}
Subsequently, $\Ser_p$ sends $\cipher_c^i$ to each replica $\Ser_i$.
Later, in the commit phase, after receiving at least $f+1$ shares, $\Ser_p$ reconstructs the secret: $s_c = \sum_{i=1}^{f+1}(f_c(x_i)\prod_{j\neq i}\frac{x_j}{x_j-x_i})$.
With this technique, the message complexity of MinBFT is reduced from $O(n^2)$ to $O(n)$. However, the polynomial-based secret sharing is more expensive than the XOR-based one used in \BFT.

\changedagain{The fallback protocol does not rely on the tree structure since a faulty node in the tree can make its whole subtree ``faulty''---thus the fallback protocol can no longer tolerate non-primary faults for free.
%Furthermore, in the reply phase, we have replicas reply to $\Cli$ directly instead of aggregating the reply message using another round.
%Since there is no need to multicast the reply message to passive replicas here, the message complexity remains $O(n)$ in this case.
If on the other hand primary failure happens in the fallback protocol, replicas execute the same view-change protocol as normal-case.}

\remove{
So far, we described \BFT in the optimistic paradigm. However, our message aggregation technique is also applicable to other trusted counter based BFT protocols.

We take MinBFT~\cite{MinBFT} as an example to explain our rationale.

}
%But, an efficient, secure virtual counter service can be built on top of hardware counters, e.g.,~\cite{dirtybit}.

%MinBFT build a USIG service on top of Trusted Platform Module (TPM)~\cite{TPM}, which limits the rate for counter generation. The TPM specification defines that the monotonic counter ``must allow for 7 years of increments every 5 seconds'' and ``must support an increment rate of once every 5 seconds''~\cite{TPM}. Further more, TPM also has a poor performance for cryptographic operations, e.g., it takes 797 ms for USIG to generate a signed counter (showed in their experiment~\cite{MinBFT}). The authors predicted this will be improved significantly in the near future. However ...

%The authors of MinBFT pointed out that the poor performance of the TPM-based USIG is partially due to the rate limiting.  Another reason they observed is the time that TPM takes to generate a signature (approximately 700 ms),
%A first comment is that normally cryptographic algorithms are implemented in hardware to be faster, not slower, but our experiments have shown that with the TPM the opposite is true. This suggests that the performance of the TPM signatures might be much improved.

%The \BFT SGX enclave maintains a virtual counter which is bound to a hardware counter provided by SGX: for each restart, the \BFT enclave will save the latest virtual counter value together with the hardware counter value (which is incremented on each reboot) on external storage.

%\revision{The safety and liveness  arguments of \BFT are discussed in the supplemental material.}

%!TEX root = ../submission.tex
\section{Correctness of \BFT}
\label{sec:correctness}

%We now analyze the safety and liveness of \BFT.

\revision{In this section, \textlabel{we provide an informal argument}{r1:5.4} for the correctness of \BFT. A formal (ideally machine-checked) proof of safety and liveness is left as future work.}

\subsection{Safety}

%To show that \BFT satisfies the safety condition,
We show that if a correct replica executed a sequence of operations $\langle op_1, ..., op_m\rangle$, then all other correct replicas executed the same sequence of operations or a prefix of it. %For an operation $op_i$ in this array, we call its index $i$ as the {sequence number}.
\begin{lemma}
\label{safety_same_view}
{\em In a view $v$, if a correct replica executes an operation $op$ with counter value $(c,v)$, no correct replica executes a different operation $op'$ with this counter value.}
\end{lemma}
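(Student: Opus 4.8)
The plan is to show that, within a single view, the primary's TEE establishes an injective binding between counter values and operations, so that no counter value can ever be associated with two distinct operations. Everything hinges on the trusted, monotonic counter: even though the primary $\Ser_p$ itself may be Byzantine, its TEE may only crash, and the only way to obtain a signature of the form $\langle H(M), (c,v)\rangle_{\sigma_p}$ is by invoking {\em request\_counter} (line~\ref{1.29}), which \emph{first} increments $c_{\textit{latest}}$ and \emph{then} signs the pair. Since $c_{\textit{latest}}$ is monotonically increasing and is re-initialized to $0$ once per view by {\em be\_primary} (line~\ref{1.53}), each value $c$ is emitted by the counter at most once in view $v$. Consequently the TEE issues at most one signed tuple $\langle \cdot, (c,v)\rangle_{\sigma_p}$ for each $(c,v)$.

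First I would argue that a correct replica executes $op$ with counter value $(c,v)$ only if it holds a valid signature $\langle H(M),(c,v)\rangle_{\sigma_p}$ that binds $M=\langle\request, op\rangle_{\sigma_{\Cli}}$ to $(c,v)$. Tracing the normal-case operation, an active replica reaches execution only after {\em verify\_counter} (line~\ref{11}) succeeds, which checks the primary's signature on $\langle H(M),(c',v')\rangle_{\sigma_p}$ and that the enclosed counter equals $c_{\textit{latest}}+1$; it then commits and executes $op$ only after verifying $s_c$ against $h_c$ (line~\ref{51}). A passive replica instead executes via the \reply path (line~\ref{60}), whose validity check explicitly ties $M$ to $(c,v)$ through the pair of signatures $\langle h_c,(c,v)\rangle_{\sigma_p}$ and $\langle H(M),(c,v)\rangle_{\sigma_p}$. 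In every case, execution of $op$ at $(c,v)$ presupposes a genuine TEE signature binding $H(M)$ to $(c,v)$.

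Then I would close the argument by contradiction. Suppose correct replicas execute $op$ and $op'$ with $op\neq op'$, both under the same counter value $(c,v)$. By the previous step there exist valid signatures $\langle H(M),(c,v)\rangle_{\sigma_p}$ and $\langle H(M'),(c,v)\rangle_{\sigma_p}$ with $M=\langle\request, op\rangle_{\sigma_{\Cli}}$ and $M'=\langle\request, op'\rangle_{\sigma_{\Cli}}$. Since the signing key resides inside the TEE and is unforgeable, both signatures were produced by the primary's TEE. But the TEE emits at most one signed tuple per counter value $(c,v)$, so the two tuples coincide, forcing $H(M)=H(M')$; by collision resistance of $H$ this gives $M=M'$, hence $op=op'$, a contradiction.

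The hard part will be pinning down the ``at most one signature per counter value'' property rigorously, since this is exactly where the trust model does the work: I must invoke that the TEE is crash-only (never Byzantine) to rule out a malicious primary coercing its enclave into reusing a counter value, and I must rely on the increment-then-sign structure of {\em request\_counter} together with the once-per-view reset in {\em be\_primary} to guarantee strict monotonicity of $c_{\textit{latest}}$ within view $v$. A secondary subtlety worth handling is that each client request consumes two counter values ($c$ for \prepare and $c+1$ for the reply secret on line~\ref{45}); I would note that this does not affect the argument, since execution of $op$ is bound specifically to the \prepare value $c$.
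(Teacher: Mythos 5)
Your core argument is sound and it takes a genuinely different, more unified route than the paper. The paper proves this lemma by a five-way case analysis on \emph{where} each of the two executions happened (normal case, view-change, fallback, and the cross combinations); for the same-phase cases it invokes exactly your key fact --- the primary's TEE never signs two different messages with the same counter value --- but for the mixed normal-case/view-change cases it instead runs a quorum-intersection argument (the $f+1$ share-providers must intersect the $f+1$ log-providers of the view-change, so the \prepare message for $op_i$ lands in the execution history $O$). You collapse all of this into a single invariant: within view $v$ the increment-then-sign structure of \emph{request\_counter}, plus the once-per-view reset in \emph{be\_primary} and the crash-only TEE model, make $(c,v)\mapsto H(M)$ injective, and every execution path is gated on possession of such a signature. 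For the lemma \emph{as stated} this is actually enough, and it is cleaner: the paper's quorum-intersection machinery is really doing work for Lemma~\ref{safety_different_view} (showing the view-changing replica executes the \emph{same} $op$, not merely that it cannot execute a different one).

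The one genuine gap is in your first step. You assert that ``in every case'' execution presupposes a valid $\langle H(M),(c,v)\rangle_{\sigma_p}$, but you only trace the two normal-case paths (active replica via \emph{verify\_counter}/\commit, passive replica via \reply). A correct replica can also execute $op$ with counter value $(c,v)$ during view-change, by executing the requests in the history $O$ supplied by the new primary (\fig{fig:view_change}, line~\ref{2.13.3}), and during the fallback protocol. The fallback path has the same \prepare structure, so it is immediate; but for the view-change path you must argue that a ``properly constructed'' $O$ associates an operation with $(c,v)$ only via a valid \prepare message carrying the old primary's TEE signature $\langle H(M),(c,v)\rangle_{\sigma_p}$ (which is what the validity check on line~\ref{2.13} enforces, since $O$ is assembled from prepared/committed/replied entries of the message logs). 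Without that step, a Byzantine new primary could in principle inject an unsigned $(op',(c,v))$ pair into $O$ and your injectivity argument would never be triggered. Add that one paragraph and your proof is complete; it is precisely the case the paper's cases 2, 3 and 5 exist to handle.
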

\begin{proof}
%We prove this lemma by contradiction.
Assume two correct replicas $\Ser_i$ and $\Ser_j$ executed two different operations $op_i$ and $op_j$ with the same counter value $(c,v)$.
There are following cases:
\begin{enumerate}
\item {\em Both $\Ser_i$ and $\Ser_j$ executed $op_i$ and $op_j$ during normal-case operation.}
In this case, they must have received valid \commit (or \reply) messages with $\langle H(M_i||res_i), (c, v)\rangle_{\sigma_{p}}$ and $\langle H(M_j||res_j), (c, v)\rangle_{\sigma_{p}}$ respectively
(\fig{fig:normal_case}, line~\ref{46} and line~\ref{48}).
This is impossible since $\Ser_p$'s $\TA$ will never sign different requests %$H(M_i||res_i)$ and $H(M_j||res_j)$
 with \mbox{the same counter value.}
\item {\em $\Ser_i$ executed $op_i$ during normal-case operation while $\Ser_j$ executed $op_j$ during view-change operation.}
In this case, $\Ser_i$ must have received a $\commit$ (or $\reply$) message for $op_i$ with an ``opened'' secret $s_{c-1}$.
To open $s_{c-1}$, a quorum $Q$ of $f +1$ active replicas must provide their shares (\fig{fig:normal_case}, line~\ref{31}).
This also implies that they have received a valid $\prepare$ message for $op_i$ with $(c-1, v)$ and their TEE-recorded counter value is at least $c-1$ (\fig{fig:normal_case}, line~\ref{11}).
Recall that before changing to the next view, $\Ser_j$ will process an execution history $O$ based on message logs provided by a quorum $Q'$ of at least $f+1$ replicas (Figure~\ref{fig:view_change}, line~\ref{2.6}).
\revision{\textlabel{So, there must be an intersection replica}{R1(18)} $\Ser_k$ between $Q$ and $Q'$, which includes the $\prepare$ message for $op_i$ in its message log,
otherwise the counter values will not be sequential.}
Therefore, a correct $\Ser_j$ will execute the operation $op_i$ with counter value $(c, v)$ before changing to the next view (\fig{fig:view_change}, line~\ref{2.13.3}).
\item {\em Both $\Ser_i$ and $\Ser_j$ execute $op_i$ and $op_j$ during view-change operation.}
They must have processed the execution histories that contains the \prepare messages for $op_i$ and $op_j$ respectively.
%As mentioned in case 2, the \newview message they received must contain a valid \prepare message to convince them to execute the operation.
$\Ser_p$'s $\TA$ guarantees that $\Ser_p$ cannot generate different \prepare messages with the same counter value.
 \changedagain{
 \item {\em Both $\Ser_i$ and $\Ser_j$ execute $op_i$ and $op_j$ during the fallback protocol.}
Similar to case 1, they must have received valid \commit messages with $\langle H(M_i||res_i), (c, v)\rangle_{\sigma_{p}}$
and $\langle H(M_j||res_j),$ $(c, v)\rangle_{\sigma_{p}}$ respectively, which \mbox{is impossible}.
\item {\em $\Ser_i$ executed $op_i$ during the fallback protocol while $\Ser_j$ executed $op_j$ during view-change operation. }
The argument for this case is the same as case 2.
}
%That means a quorum $Q$ of $f+1$ replicas have committed to execute $op_i$ with counter value $(c, v)$
%and another quorum $Q'$ of $f+1$ replicas have committed to execute $op_j$ with the same counter value.
%Then, there must be an intersection replica $\Ser_k$ have committed in both quorums.
%Suppose $\Ser_k$ first sent a \newview or \viewchange message in quorum $Q$,
%the trusted counter guarantees that $\Ser_k$ must include this message when it sent \newview or \viewchange message in quorum $Q'$.
\end{enumerate}

Therefore, we conclude that it is impossible for two different operations to be executed with the same counter value \mbox{during a view}.
\end{proof}

\begin{lemma}
\label{safety_different_view}
%{\em If a correct replica executes an operation $op$ with counter value $(c, v)$ in a view $v$, all correct replicas will execute $op$ with counter value $(c,v)$ before changing to a new view.}
{\em If a correct replica executes an operation $op$ in a view $v$, no correct replica will change to a new view without \mbox{executing $op$}.}
\end{lemma}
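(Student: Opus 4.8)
The plan is to mirror the intersection argument from the proof of Lemma~\ref{safety_same_view}, but now across the view boundary, and to wrap it in an induction on the (consecutive) view numbers. Since \texttt{be\_primary} and \texttt{update\_view} each increment $v$ by one, any later view $v' > v$ is reached by passing through $v+1, v+2, \dots$; so it suffices to handle the first transition out of $v$. Concretely, once I show that $op$ lies in the execution history used for the transition $v \to v+1$ and that every correct replica executes it before entering $v+1$, every later transition trivially preserves the property: a correct replica never un-executes an operation, so for all views beyond $v+1$ it has already executed $op$.

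First I would record what ``$\Ser_i$ executed $op$ in view $v$ with counter $(c,v)$'' buys us, exactly as in Lemma~\ref{safety_same_view}: opening the secret that commits $op$ required a quorum $Q$ of $f+1$ active replicas to release valid shares, so each member of $Q$ processed the \prepare for $op$ at $(c,v)$ through its TEE and advanced its monotonic counter past $c$. Next I would turn to the transition to $v+1$: by Figure~\ref{fig:view_change} the new primary collects $f+1$ \reqviewchange messages, forming a quorum $Q'$, and builds the execution history $O$ from those logs (line~\ref{2.6}). Because $n = 2f+1$, we have $|Q\cap Q'|\ge (f+1)+(f+1)-(2f+1)=1$, so some replica $\Ser_k$ contributes a log that attests to $op$ at counter $(c,v)$.

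The key step is then to argue that $O$ must contain $op$ at position $c$. Here I would lean on the TEE-enforced counter rather than on pure correct-replica counting: a correct replica accepts a \newview only if $O$ is valid (Figure~\ref{fig:view_change}, line~\ref{2.13}), and validity requires the counter values in $O$ to form a gap-free sequence matching the $\Ser_p$-signed counters. Since $(c,v)$ was genuinely bound to $op$ by $\Ser_p$'s TEE, and, by Lemma~\ref{safety_same_view}, no other operation can be bound to $(c,v)$, the only way to build a valid, gap-free history reaching the counters recorded by $Q\cap Q'$ is to place $op$ at $c$; a history that skipped $c$ or substituted a different operation there would be rejected. Finally, by lines~\ref{2.13.3}--\ref{2.14}, each correct replica executes all not-yet-executed requests of $O$ --- in particular $op$ --- before calling \texttt{update\_view} to enter $v+1$, which establishes the claim. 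The fallback/transition protocol is covered for free, since it uses the very same view-change machinery.

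I expect the main obstacle to be precisely the ``$O$ must contain $op$'' step. The naive quorum-intersection bound only guarantees a single shared replica, which could itself be Byzantine, so I cannot simply invoke a shared correct witness as one would in a $3f+1$ protocol. The argument must instead hinge on \emph{why} a Byzantine $\Ser_k$ cannot cause $op$ to be dropped: on the fact that the TEE makes counters monotone and sequential, and that \newview validity rejects any history with a hole at $c$ or a conflicting entry there (invoking Lemma~\ref{safety_same_view}). Making that ``sequentiality forces inclusion'' argument airtight, and confirming that it survives the ``reuse a previously built valid history'' rule across repeated or failed view changes, is the delicate part.
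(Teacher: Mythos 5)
Your argument for the main case coincides with the paper's: the quorum $Q$ of $f+1$ active replicas that released shares to open the commit secret must intersect the quorum $Q'$ of $f+1$ \reqviewchange senders, and the single intersection replica --- which may itself be Byzantine --- cannot cause $op$ to be dropped because a history omitting counter $(c,v)$ would not be sequential and would be rejected as invalid. That is exactly the ``otherwise the counter values will not be sequential'' step the paper invokes (via Case~2 of the proof of Lemma~\ref{safety_same_view}), and your observation that one cannot rely on a shared \emph{correct} witness in a $2f+1$ system is the right reading of why the argument must go through the TEE counters.

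The gap is that your case analysis is incomplete: you only treat the situation where $\Ser_i$ executed $op$ during normal-case (or fallback) operation, where ``executing $op$'' implies that a quorum opened the secret. The paper has a second branch: $\Ser_i$ may have executed $op$ \emph{during a view-change}, i.e., because $op$ appeared in an execution history $O$ that $f+1$ replicas committed to via \newview/\viewchange messages; in that scenario no commit secret for $op$ was ever reconstructed, so your premise about the quorum $Q$ of share-releasers fails and the counter-sequentiality argument has nothing to anchor to. The paper handles this with a separate intersection argument (between the $f+1$ replicas that committed to the history containing $op$ and the $f+1$ \reqviewchange senders of the subsequent view-change), and further splits on whether $\Ser_i$ executed $op$ before or after $\Ser_j$ moved on; the latter sub-case is resolved precisely by the ``reuse a previously built valid history'' rule that you flag at the end but leave unresolved. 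As written, your proof establishes the lemma only for operations executed in the normal case, so your induction across views does not close: once a view-change has occurred, subsequently executed operations may fall into the unhandled branch.
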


\begin{proof}
Assume that a correct replica $\Ser_i$ executed $op$ in view $v$, and another correct replica $\Ser_j$ change to the next view without executing $op$.
We distinguish between two cases:

\begin{enumerate}
\item {\em $\Ser_i$ executed $op$ during normal-case operation \changedagain{(or during fallback)}.}
As mentioned in Case 2 of the proof of Lemma~\ref{safety_same_view}, the \prepare message for $op$ will be included in the execution history $O$.
Therefore, a correct $\Ser_j$ will execute it before changing to the next view.
\item {\em $\Ser_i$ executed $op$ during view-change operation.} There are two possible cases:
\begin{enumerate}
\item {\em $\Ser_i$ executed $op$ before $\Ser_j$ changing to the next view.}
In this case, there are at least $f+1$ replicas that have committed to execute the history containing $op$ before $\Ser_j$ changing to the next view.
Since $\Ser_j$ needs to receive $f+1$ $\reqviewchange$ messages, there must be an intersection replica $\Ser_k$ that includes $op$ to its $\reqviewchange$ message.
Then, a correct $\Ser_j$ will execute $op$ before changing to the next view.
\item {\em $\Ser_i$ executed $op$ after $\Ser_j$ changing to the next view.}
Due to the same reason as case (a), $\Ser_i$ will process the same execution history (without $op$) as the one $\Ser_j$ executed. %So $\Ser_i$ will not execute $op$.
\end{enumerate}
\end{enumerate}

Therefore, we conclude that if a correct replica executes an operation $op$ in a view $v$, all correct replicas will execute $op$ before changing to a new view.
%Let $Q$ denote the quorum of active replicas and let $Q'$ denote the quorum of $f+1$ replicas from which $\Ser_j$ has received the $\viewchange$ messages.
%Since $n=2f+1$ and $|Q| + |Q'| = 2f+2 \geq 2f+1$,
% there will be a ``hole''.{\color{red}please describe that hole more clearly ;)}
%If the new primary $\Ser_{p'}$ fails to include this $\prepare$ message in $L$, it will be easily detected by $\Ser_j$.
%This contradicts the assumption.
%So if $\Ser_i$ executes $op$ with counter value $c$, then all correct replicas will execute $op$ with the same counter value before changing to the next view.
\end{proof}

\begin{theo}
{\em Let $seq = \langle op_1, ..., op_m\rangle$ be a sequence of operations executed by a correct replica $\Ser_i$,
then all other correct replicas executed the same sequence or a prefix of it.}
\end{theo}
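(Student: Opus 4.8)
The plan is to lift the two pointwise lemmas into a statement about whole sequences by exploiting the fact that a correct replica executes operations strictly in the order of their TEE-assigned counter values. First I would fix a global total order on counter values, comparing $(c,v)$ and $(c',v')$ lexicographically by $(v,c)$: a correct replica never decreases its view number, and within a view the TEE enforces that each freshly accepted counter value equals $c_{\textit{latest}}+1$ (functions \emph{request\_counter}, \emph{verify\_counter}, \emph{update\_counter} in \fig{fig:tee}). Hence the counter values at which any correct replica executes operations are strictly increasing and, within each view, gapless. Consequently the executed sequence $seq$ of a replica is exactly the list of operations read off its counter values in increasing global order, and there is a canonical sequence obtained by concatenating, view by view, the operations ordered by counter value.

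Next I would show that the partial map sending a counter value to the operation executed at it is well defined across replicas. Lemma~\ref{safety_same_view} already gives this within a single view: no two correct replicas disagree on the operation carried out at a given $(c,v)$. To stitch across view boundaries I would invoke Lemma~\ref{safety_different_view}: before any correct replica advances from view $v$ to a later view it must have executed every operation that any correct replica executed in view $v$. Thus no replica ``skips ahead'' leaving a hole in an earlier view, so each replica's executed prefix of a given view is complete relative to everyone else's, and the two lemmas together certify that each correct replica's $seq$ is an initial segment of the single canonical sequence.

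Finally I would conclude by a short induction on the position $k$. Given two correct replicas with executed sequences $seq$ and $seq'$, the gapless ordering established above guarantees that their $k$-th executed operations carry the same counter value whenever both are defined; Lemma~\ref{safety_same_view} (same view) or its combination with Lemma~\ref{safety_different_view} (at a view boundary) then forces $op_k = op'_k$. Two sequences that are both initial segments of one canonical sequence are comparable, so one is a prefix of the other, which is the claim. I expect the main obstacle to be the ordering/gaplessness step rather than the agreement step: the two lemmas supply agreement at each counter value essentially for free, but arguing that every correct replica's execution is a genuine gapless prefix---so that position $k$ in $seq$ and position $k$ in $seq'$ truly correspond to the same counter value---requires carefully tying the sequentiality enforced by each local TEE to the view-change and fallback paths, where a replica replays a reconstructed history $O$ rather than processing counters one request at a time.
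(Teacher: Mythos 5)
Your proposal is correct and follows essentially the same route as the paper: the paper's own proof is a two-line contradiction at the first differing position $k$, invoking Lemma~1 when the two operations were executed in the same view and Lemma~2 otherwise. The extra care you take over counter-value gaplessness and the alignment of position $k$ across replicas is a step the paper silently assumes, so your write-up is in fact more complete than the published argument.
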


\begin{proof}
%We prove this theorem by contradiction.
Assume a correct replica $\Ser_j$ executed a sequence of operations $seq'$ that is not a prefix of $seq$,
i.e., there is at least one operation $op'_k$ that is different from $op_k$.
Assume that $op_k$ was executed in view $v$ and $op'_k$ was executed in view $v'$.
If $v' = v$, this contradicts Lemma 1, and if $v' \neq v$, this contradicts Lemma 2---thus proving the theorem.
%Consequently, the theorem holds.
\end{proof}

%\noindent\textbf{Liveness.}
\subsection{Liveness}

We say that $\Cli$'s request {\em completes} when $\Cli$ accepts the reply.
We show that an operation requested by a correct $\Cli$ eventually completes.
We say a view is {\em stable} if the primary is correct.

\begin{lemma}
\label{liveness1}
{\em During a stable view, an operation $op$ requested by a correct client will complete.}
\end{lemma}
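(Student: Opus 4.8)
The plan is to show that a correct client's request runs through all four phases (\prepareP, \commitP, \replyP) and that $\Cli$ eventually accepts a valid \reply message. First I would dispose of message loss: since $\Cli$ is correct and re-broadcasts $M$ on timeout, and since the view is stable (so $\Ser_p$ is correct), the weak-synchrony assumption guarantees that $\Ser_p$ eventually receives $M$, invokes \emph{request\_counter}, and multicasts a well-formed \prepare message to the $f+1$ active replicas (\fig{fig:normal_case}, lines~\ref{7}--\ref{8}). Because $\Ser_p$ honestly assigns a fresh counter value $(c,v)$, every correct active $\Ser_i$ passes \emph{verify\_counter} at line~\ref{11}.

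Next I would establish progress through \commitP and \replyP in the favourable case where all active replicas are correct. Then each correct $\Ser_i$ reveals its share, the partial aggregates propagate up the tree (lines~\ref{27}--\ref{30}) so that $\Ser_p$ reconstructs $s_c$, executes $op$, and multicasts \commit (line~\ref{46}); the identical argument applied to the next secret $s_{c+1}$ yields a \reply message whose two opened secrets and two counter-bound signatures satisfy the five acceptance conditions $\Cli$ checks. Hence the request completes, and no view-change fires because a correct $\Ser_p$ produces only valid signatures.

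The crux, where I expect the main difficulty, is handling the up to $f$ faulty active replicas. A faulty child either stays silent (caught by the timer, line~\ref{15}) or sends a share failing $H(\hat{s}_c^j)=\hat{h}_c^j$ (caught at line~\ref{27}); either way its parent emits \suspect, and since $\Ser_p$ is correct it honestly rebuilds the tree, swapping the accused out for a passive replica and demoting the accuser to a leaf (lines~\ref{20}--\ref{23}). The subtlety is that a Byzantine parent can \emph{falsely} accuse a correct child, and a Byzantine interior node can sabotage its whole subtree, so a single reconstruction need not yield an all-correct active set and I would \emph{not} attempt to prove that the tree converges. Instead I would invoke the threshold/fallback mechanism of Section~\ref{sec:failure_detection}: each unsuccessful round forces at least one detected non-primary failure, so once the threshold is exceeded $\Ser_p$ triggers the transition protocol of Section~\ref{sec:minBFT} and switches to the classical fallback.

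Finally I would close the argument in fallback mode. The fallback abandons the tree and uses $(f+1)$-out-of-$(2f+1)$ Shamir sharing, so it tolerates non-primary faults directly: since $n=2f+1$ and at most $f$ replicas are faulty, at least $f+1$ correct replicas remain and always supply enough valid shares for the (correct) $\Ser_p$ to reconstruct each secret and drive the request to a \reply. As the primary is correct throughout the stable view, no view-change is triggered and $\Cli$ accepts. Combining the two regimes---direct completion when the active set is already correct, and completion via the fallback otherwise---shows that every request of a correct client completes during a stable view.
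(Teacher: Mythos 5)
Your proof is correct and follows essentially the same route as the paper's: a correct primary issues a valid \prepareP, the request completes directly if all active replicas cooperate, otherwise failure detection replaces suspects until the threshold forces the transition to the classical fallback, which completes because at most $f$ of the $2f+1$ replicas are faulty. Your version is somewhat more careful than the paper's terse argument---in particular your explicit refusal to claim that tree reconstruction converges to an all-correct active set, relying instead on the fallback threshold---but the decomposition and the key mechanisms invoked are the same.
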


\begin{proof}
Since the primary $\Ser_p$ is correct, a valid $\prepare$ message will be sent.
%A correct replica $\Ser_i$ will verify the validity of this message and reply with the corresponding share.
If all active replicas behave correctly, the request will complete.
However, a faulty replica $\Ser_j$ may either crash or reply with a wrong share. % which may make $\Ser_p$ fail to open the secret, thus fail to construct a valid $\reply$ message.
This behavior will be detected by its parent (\fig{fig:normal_case}, line~\ref{28}) and $\Ser_j$ will be replaced by a passive replica (\fig{fig:normal_case}, line~\ref{21}).
%Eventually, all active replicas will behave correctly and complete the request.
\changedagain{If a threshold number of failure detections has been reached, correct replicas will initiate a view-change to switch to the fallback protocol.
The view-change will succeed since the primary is correct. In the fallback protocol, the request will complete as long as the number of non-primary faults is at most $f$.}
\end{proof}

\begin{lemma}
\label{liveness2}
{\em A view $v$ eventually will be changed to a stable view if $f + 1$ correct replicas request view-change.}
\end{lemma}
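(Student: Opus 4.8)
The plan is to prove Lemma~\ref{liveness2}, which states that a view $v$ will eventually be changed to a stable view if $f+1$ correct replicas request a view-change. The overall strategy is to show two things in sequence: first, that the view-change protocol (\fig{fig:view_change}) will successfully complete whenever $f+1$ correct replicas initiate it; and second, that repeated application of this mechanism must eventually land on a \emph{stable} view (one with a correct primary), because faulty primaries cannot indefinitely block progress under the weak synchrony assumption.

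First I would establish that the view-change \emph{completes} once $f+1$ correct replicas broadcast $\reqviewchange$ messages. The new primary $\Ser_{p'}$ (chosen deterministically as $p' = v' \bmod n$) collects these $f+1$ messages, builds the execution history $O$ and new tree $T'$, invokes $be\_primary$ on its TEE, and broadcasts $\newview$ (\fig{fig:view_change}, lines~\ref{2.6}--\ref{2.10}). Each correct replica receiving a valid $\newview$ rebroadcasts $\viewchange$ (line~\ref{2.13.2}), and upon collecting $f$ further $\viewchange$ messages executes the pending requests and calls $update\_view$ (lines~\ref{2.13.3}--\ref{2.14}). The key counting argument is that among $n = 2f+1$ replicas with at most $f$ faulty, the $f+1$ correct requesters guarantee enough messages at each stage; I would invoke the weak synchrony assumption so that these messages are eventually delivered within the timeout bounds, ensuring the protocol does not stall waiting for the (at most $f$) faulty replicas.

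Next I would argue that progress toward a \emph{stable} view is guaranteed. If the newly selected primary $\Ser_{p'}$ is itself faulty, it may fail to send a valid $\newview$, or send an inconsistent one; correct replicas will then time out again and initiate a further view-change, incrementing $v'$ and thereby rotating the primary via $p' = v' \bmod n$. Since there are at most $f$ faulty replicas among $n = 2f+1$, the round-robin rotation cannot select faulty primaries indefinitely: within at most $f+1$ successive view-changes a correct primary must be chosen, yielding a stable view. I would also note that the $f+1$ correct replicas triggering the view-change is exactly the quorum needed to drive the next $\reqviewchange$ round, so the inductive hypothesis is re-established at each failed attempt and the argument chains cleanly.

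The main obstacle will be handling the subtlety that a faulty primary can misbehave in partial or equivocating ways rather than simply crashing---for instance, sending a valid $\newview$ to some correct replicas but not others, or proposing a malformed history $O$. I would address this by appealing to the validity check on $O$ (line~\ref{2.13}) and the TEE-bound counter values, which prevent a faulty primary from getting correct replicas to commit to divergent states; any such misbehavior simply causes correct replicas to reject the $\newview$ and time out, reducing to the ``primary failed to make progress'' case already handled by the rotation argument. Care is also needed to ensure timeouts are set long enough (under weak synchrony) that a correct primary is never falsely blamed, so that once a correct primary is reached the view genuinely stabilizes rather than being prematurely abandoned.
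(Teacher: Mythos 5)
Your overall skeleton matches the paper's: a case split on whether the new primary behaves correctly, plus round-robin rotation of the primary and weak synchrony to guarantee that a correct primary is eventually reached. The gap is in your handling of the hardest case, which is exactly where the paper's proof does its real work. You identify the right obstacle---a faulty $\Ser_{p'}$ sending a valid \newview to some correct replicas but not others---but your resolution is wrong: you claim this ``causes correct replicas to reject the \newview and time out,'' reducing to the crash case. That cannot happen, because the message \emph{is} valid for the replicas that receive it; the actual danger is that a sub-quorum $Q'$ of fewer than $f+1$ correct replicas accepts it and transitions into the new (non-stable) view, while the remaining correct replicas---possibly as few as one---are left behind and cannot assemble the $f+1$ \reqviewchange messages needed to start the next view-change on their own. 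At that point your assertion that ``the inductive hypothesis is re-established at each failed attempt'' is precisely what needs proof and is not obviously true.

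The paper closes this hole by arguing that the replicas in $Q'$, now sitting in a non-stable view, will themselves eventually issue \reqviewchange: either the faulty replicas follow the protocol (in which case progress is made) or they deviate, in which case the $Q'$ replicas detect the failure, time out, and join the stragglers, restoring a quorum of $f+1$ correct requesters so that the induction continues. You should add this step explicitly; without it the chain of view-changes could stall in a split state. The rest of your argument (the counting for quorum formation, the bound of at most $f+1$ rotations before a correct primary is selected, and the caveat about timeouts being long enough under weak synchrony) is sound and consistent with the paper.
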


\begin{proof}
Suppose a quorum $Q$ of $f+1$ correct replicas requests a view-change.
We distinguish between three cases:
\begin{enumerate}
\item {\em The new primary $\Ser_{p'}$ is correct and all replicas in $Q$ received a valid} $\newview$ {message}. %It will broadcast a $\newview$ message and
They will change to a stable view successfully (\fig{fig:view_change}, line~\ref{2.10}).

\item {\em None of the correct replicas received a valid} $\newview$ {\em message.} %This can happen when the new primary $\Ser_{p'}$ is faulty,
%or the communication is slow and the timeout expires in all correct replicas.
%{\color{red}this reads weird. You can summarize them by the ``message is not well received''. You can then say that this captures all these cases.}
In this case, another view-change will start.

\item {\em Only a quorum $Q'$ of less than $f + 1$ correct replicas received a valid} $\newview$ {\em message.}
%Again, this can happen when the new primary $\Ser_{p'}$ is faulty,  or the communication is slow and the timeout expires in ($f+1-|Q'|$) correct replicas.
In this case, faulty replicas can follow the protocol to make the correct replicas in $Q'$ %believe that the protocol is running normally and thus
change to a non-stable view.
Other correct replicas %that are not in $Q'$
will send new $\reqviewchange$ messages due to timeout, but a view-change will not start since they are less than $f+1$.
When faulty replicas deviate from the protocol, the correct replicas in $Q'$ will \mbox{trigger a new view-change.}

\end{enumerate}

In cases 2 and 3, a new view-change triggers the system to reach again one of the above three cases.
\changed{Recall that, under a weak synchrony assumption, messages are guaranteed to be delivered in polynomial time.
Therefore, the system will eventually reach case 1, i.e., a stable view \mbox{will be reached.}}
\end{proof}

\begin{theo}
{\em An operation requested by a correct client eventually completes.}
\end{theo}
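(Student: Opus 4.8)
The plan is to combine the two liveness lemmas already established into a single argument about any operation requested by a correct client. The key observation is that there are only two regimes the system can be in: a stable view (where the primary is correct) or an unstable view (where the primary is faulty). Lemma~\ref{liveness1} handles the first regime directly, and Lemma~\ref{liveness2} provides the bridge from the second regime back to the first, so the theorem should follow by showing that the system cannot remain in unstable views indefinitely.

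First I would fix a correct client $\Cli$ and an operation $op$ it requests, and recall the client-side retransmission behavior: $\Cli$ sets a timer after sending its request and rebroadcasts $M$ upon timeout, so the request is guaranteed to reach all correct replicas eventually. I would then split on the state of the current view. In the stable case, Lemma~\ref{liveness1} immediately gives that $op$ completes. In the unstable case, the correct replicas will observe a timeout (no valid \prepare/\commit/\reply message arrives) and broadcast \reqviewchange messages; since there are at least $f+1$ correct replicas among the $n = 2f+1$ total, this satisfies the hypothesis of Lemma~\ref{liveness2}, which guarantees a transition to a stable view. Once in a stable view, I would re-invoke Lemma~\ref{liveness1} to conclude $op$ completes.

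The step I expect to be the main obstacle is making the ``eventually'' rigorous rather than circular: Lemma~\ref{liveness2} only promises that a stable view is \emph{eventually} reached, and one must argue that the correct client's request survives across the (possibly several) intervening view-changes rather than being lost. The clean way to close this is to lean on the weak synchrony assumption invoked at the end of the proof of Lemma~\ref{liveness2}: because messages are delivered within a known time bound, each failed view-change attempt terminates in bounded time and triggers a fresh attempt, so the number of unstable views before a correct primary is selected is finite (the primary rotates as $p' = v' \bmod n$, and at least $f+1$ of the $n$ replicas are correct). Meanwhile, $\Cli$'s retransmissions ensure $M$ is present in the message logs that the new primary collects, so $op$ is carried forward into the execution history $O$ (cf.\ the safety argument) and is not dropped. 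Combining bounded progress toward a stable view with the guarantee of Lemma~\ref{liveness1} inside that stable view yields completion of $op$, proving the theorem.
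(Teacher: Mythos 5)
Your proposal is correct and follows essentially the same route as the paper: case-split on whether the view is stable, apply Lemma~\ref{liveness1} in the stable case, and in the unstable case argue that timeouts eventually drive at least $f+1$ correct replicas to request a view-change so that Lemma~\ref{liveness2} delivers a stable view. You in fact add a detail the paper elides (that the client's retransmissions and the collected message logs carry the request across intervening view-changes), which only strengthens the argument.
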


\begin{proof}
In stable views, operations will complete eventually (Lemma~\ref{liveness1}).
If the view is not stable, there are two cases: %{\color{red}again, i would use the description environment}
\begin{enumerate}
\item {\em At least $f + 1$ correct replicas request a view-change.} The view will eventually be changed to stable \mbox{(Lemma 4).}
\item  {\em Less than $f + 1$ correct replicas request a view-change.} %This case is similar to case 3) of Lemma~\ref{liveness2}.
 %do not request the view-change and
Requests will complete if all active replicas follow the protocol.
Otherwise, requests will not complete within a timeout, and eventually all correct replicas will request view-change and the system falls to case 1.
\end{enumerate}

Therefore, all replicas will eventually fall into a stable view and clients' requests will complete.
\end{proof}

%!TEX root = ../submission.tex
\section{Design Choices}
\label{sec:discussion}

\subsection{\changed{Virtual Counter}}
\label{sec:counter}

%\changed{
Throughout the paper, we assume that each $\TA$ maintains a monotonic counter.
The simplest way to realize a monotonic counter is to directly use a hardware monotonic counter supported by the underlying TEE platform
(for example, MinBFT used TPM~\cite{TPM} counters and CheapBFT used counters realized in FPGA; Intel SGX platforms also support monotonic counters in hardware~\cite{SGXcounter}).
%which is trivially supported by the hardware counter,
%e.g.,  MinBFT uses Trusted Platform Module (TPM)~\cite{TPM} counters, CheapBFT uses FPGA counters, and recently Intel also provided monotonic counters as an SGX feature~\cite{SGXcounter}.
However, such hardware counters constitute a bottleneck for BFT protocols due to their low efficiency: for example, when using SGX counters, a read operation takes 60-140~ms and an increment operation takes~80-250 ms, depending on the platform~\cite{ROTE}.
%hardware counters are usually rate-limited to prevent wear.
%}

%We can also have $\TA$ maintain a virtual counter value inside its memory, but such counter values will be reset after each machine reboot.
An alternative is to have the TEE maintain a virtual counter in volatile memory; but it will be reset after each system reboot.
This can be naively solved by recording the counter value on persistent storage before reboot, but this solution suffers from the rollback attacks~\cite{ROTE}:
a faulty $\Ser_p$ can call the {\em request\_counter} function twice, each of which is followed by a machine reboot.
As a result, $\Ser_p$'s $\TA$ will record two counter values on the persistent storage.
$\Ser_p$ can just throw away the second value when the $\TA$ requests the latest backup counter value.
In this case, $\Ser_p$ can successfully equivocate.

To remedy this, we borrow the idea from~\cite{dirtybit}:
when $\TA$ wants to record its state (e.g., in preparation for a machine reboot), it increments its hardware counter $C$ and stores $(C+1, c, v)$ on persistent storage.
On reading back its state, the $\TA$ accepts the virtual counter value if and only if the current hardware counter value matches the stored one.
%After a malicious or unscheduled reboot, i.e.,
If the $\TA$ was terminated without incrementing and saving the hardware counter value (called {\em unscheduled reboot}),
it will find a mismatch and refuse to process any further requests from this point on.
This completely prevents equivocation; a faulty replica can only achieve DoS by causing unscheduled reboots.

In \BFT, we treat an unscheduled reboot as a crash failure. %but it may also occur on correct replicas.
To bound the number of failures in the system, we provide a {\em reset\_counter} function to allow crashed (or rebooted) replicas to rejoin the system.
Namely, after an unscheduled reboot, $\Ser_i$ can broadcast a \rejoin message.
Replicas who receive this message will reply with a signed counter value together with the message log since the last checkpoint (similar to the \viewchange message).
%Once enough replies are gathered, $\Ser_i$ hands them to its local $\TA$, which
$\Ser_i$'s $\TA$ can reset its counter value and work again
if and only if it receives $f+1$ consistent signed counter values from different replicas (line~\ref{1.49} in \fig{fig:tee}).
However, a faulty $\Ser_p$ can abuse this function to equivocate: request a signed counter value, enforce an unscheduled reboot, and then broadcast a \rejoin message to reset its counter value.
In this case, $\Ser_p$ can successfully associate two different messages with the same counter value.
To prevent this, we have all replicas refuse to provide a signed counter value to an unscheduled rebooted primary,
so that $\Ser_p$ can reset its counter value only when it becomes a normal \mbox{replica after a view-change.}

\subsection{BFT \`A la Carte}
\label{sec:alacarte}

\begin{figure*}[htbp]
    \centering
    \begin{subfigure}[Design choices (not all combinations are possible: e.g., X and C cannot be combined).]{
        \label{fig:components}
        \includegraphics[width=0.42\textwidth]{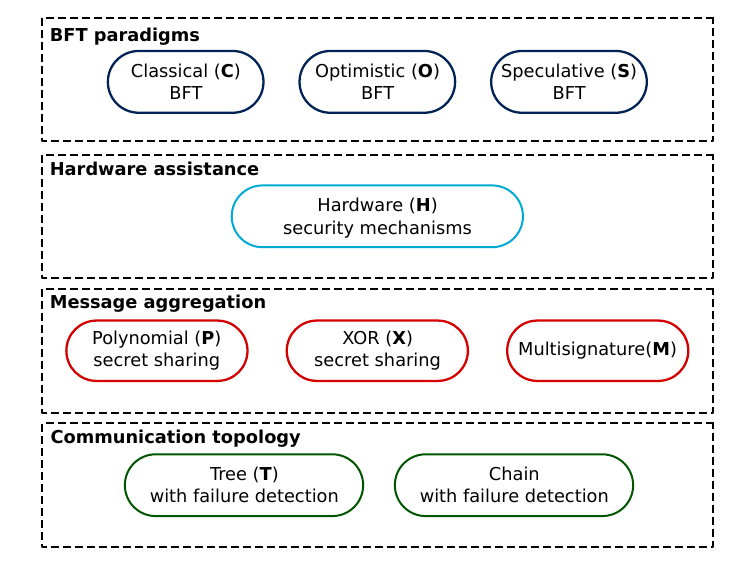}}
    \end{subfigure}
    \hfill
    %\hspace{0.5in}
    \begin{subfigure}[Performance of some design choice combinations.]{
        \label{fig:comparisons}
        \includegraphics[width=0.42\textwidth]{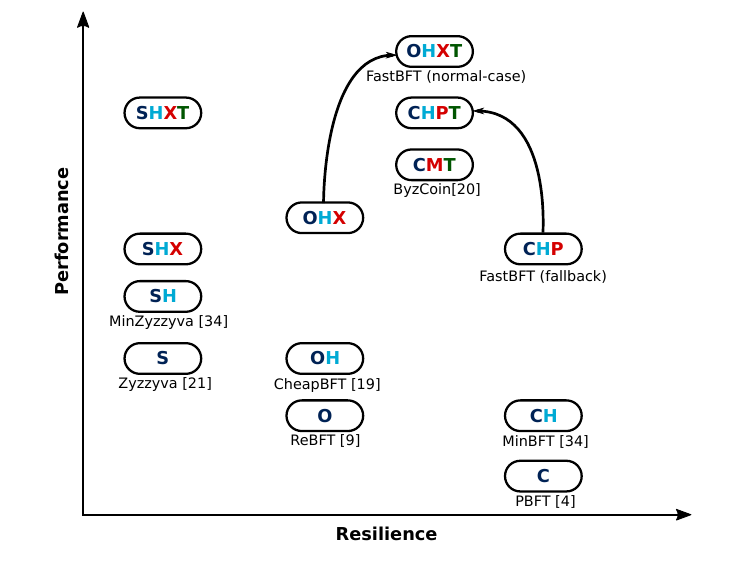}}
    \end{subfigure}
    \caption{Design choices for BFT protocols.}
\end{figure*}

In this section, we revisit our design choices in \BFT, show different protocols that can result from alternative design choices and qualitatively compare them \mbox{along two dimensions:}
\begin{itemize}
\item {\bf Performance:}  latency required to complete a request (lower the better) 
and the peak throughput (higher the better) of the system in common case. 
Generally (but not always), schemes that exhibit low latency also have high throughput; and
\item {\bf Resilience:} cost required to tolerate non-primary faults\footnote{\scriptsize{All BFT protocols require view-change to recover from primary faults,
which incurs a similar cost in different protocols.}}.
\end{itemize}

\fig{fig:components} depicts design choices for constructing BFT protocols; \fig{fig:comparisons} compares interesting combinations. Below, we discuss different possible BFT protocols, informally discuss their performance, resilience, and placement in \fig{fig:comparisons}.

\noindent \textbf{BFT paradigms.}
As mentioned in Section~\ref{sec:Background}, we distinguish between three possible paradigms:
classical (C) (e.g., PBFT~\cite{PBFT}), optimistic (O) \revision{(e.g., \textlabel{Distler et. al}{R2(5.3)}~\cite{Distler2016})}, and speculative (S) (e.g., Zyzzyva~\cite{Zyzzyva}).
Clearly, speculative BFT protocols (S) provide the best performance since it avoids all-to-all multicast.
However, speculative execution cannot tolerate even a single crash fault and requires clients' help to recover from inconsistent states.
In real-world scenarios, clients may have neither incentives nor resources (e.g., lightweight clients) to do so.
If a (faulty) client fails to report the inconsistency, replicas whose state has diverged from others may not discover this.
Moreover, if inconsistency appears, replicas may have to rollback some executions, which makes the programming model more complicated.
Therefore, speculative BFT fares the worst in terms of resilience.
In contrast, classical BFT protocols (C) can tolerate non-primary faults for free but requires all replicas to be involved in the agreement stage.
By doing so, these protocols achieve the best resilience but at the expense of bad performance.
Optimistic BFT protocols~(O) achieve a tradeoff between performance and resilience.
They only require active replicas to execute the agreement protocol which significantly reduces message complexity but still requires all-to-all multicast.
%\ifsubmission
%\else
Although these protocols require transition~\cite{CheapBFT} or view-change~\cite{XFT} to tolerate non-primary faults, they require neither support from the clients nor any rollback mechanism. %and there is no need of rollback.
%\fi

\noindent \textbf{Hardware assistance.}
Hardware security mechanisms (H) can be used in all three paradigms.
For instance, MinBFT~\cite{MinBFT} is a classical (C) BFT leveraging hardware security (H); to ease presentation, we say that MinBFT is of the CH family. Similarly, CheapBFT~\cite{CheapBFT} is OH (i.e., optimistic + hardware security) and MinZyzzyva~\cite{MinBFT} is SH (i.e., speculative + hardware security).
Hardware security mechanisms improve performance in all three paradigms (by reducing the number of required replicas and/or communication phases) without impacting resilience.

\noindent \textbf{Message aggregation.} %One can further extend these constructs by leveraging message aggregation. Here,
We distinguish between message aggregation based on multisignatures (M)~\cite{CoSi} and on secret sharing (such as the one used in \BFT). We further classify secret sharing techniques into (the more efficient) XOR-based (X) and (the less efficient) polynomial-based (P).
Secret sharing techniques are only applicable to hardware-assisted BFT protocols (i.,e to CH, OH, and SH).
In the CH family, only polynomial-based secret sharing is applicable since classical BFT only requires responses from a threshold number of replicas in \commitP and \replyP.
\changedagain{Notice that CHP is the fallback protocol of \BFT.}
XOR-based secret sharing can be used in conjunction with \mbox{OH and SH}.
Message aggregation significantly increases performance of optimistic and classical BFT protocols but is of little help to speculative BFT which already has $O(n)$ message complexity.
After adding message aggregation, optimistic BFT protocols (OHX) become more efficient than speculative ones (SHX), since both of them have $O(n)$ message complexity but OHX requires less replicas to actively run the protocol.

\noindent \textbf{Communication topology.}
In addition, we can improve efficiency using better communication topologies (e.g., tree).
We can apply the tree topology with failure detection (T) to any of the above combinations e.g., CHPT, OHXT (which is \BFT), SHXT and CMT (which is ByzCoin~\cite{ByzCoin}).
Tree topology improves the performance of all protocols. %dimension of speculative BFT (SHXT) and classical BFT (CHPT).
For SHXT, resilience remains the same as before, since it still requires rollback in case of faults.
For OHXT, resilience will be improved, since transition or view-change is no longer required for non-primary faults.
On the other hand, for CHPT, resilience will almost be reduced to the same level as OHXT,
since a faulty node in the tree can make its whole subtree ``faulty'',
thus it can no longer tolerate non-primary faults for free. %requires failure detection (the \mbox{same as OHXT)}.
\changedagain{Chain is another communication topology widely used in BFT protocols~\cite{Next700, BChain}.
It offers high throughput but incurs large latency due to its $O(n)$ communication steps}.
Other communication topologies may provide better efficiency and/or resilience.
We leave the investigation and comparison of them \mbox{as future work}.

In \fig{fig:comparisons}, we summarize the above discussion visually. We conjecture that the use of hardware and the message aggregation can bridge the gap in performance between optimistic and speculative paradigms without adversely impacting resilience. The reliance on the tree topology further enhances performance and resilience. In the next section, we confirm these conjectures experimentally.

\remove{
\begin{figure*}[t]
    \centering
    \includegraphics[width=0.9\textwidth]{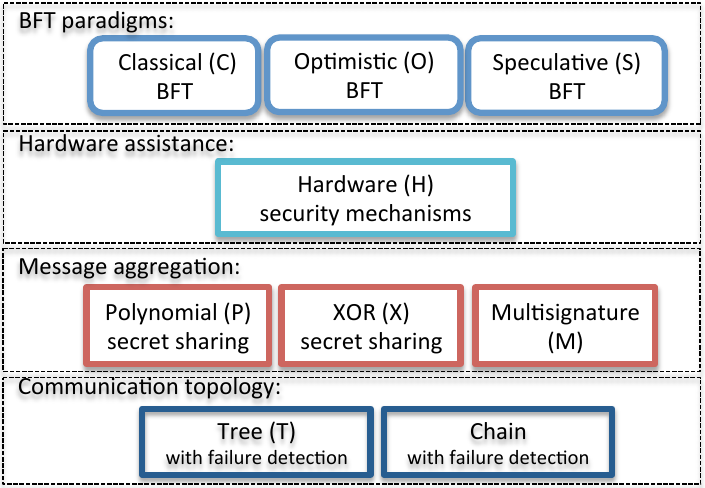} % first figure itself
    \caption{Synthesis and comparisons of design choices.}
    \label{fig:synthesis}
    \emph{A note}
\end{figure*}
}

\remove{
\begin{SCfigure}
	\label{fig:modules}
        \includegraphics[scale=0.48]{./pic/modules}
        \caption{\footnotesize{Different combinations:} \\
 {\scriptsize Classic BFT+(1)=MinBFT \\
Optimistic BFT + (1) = CheapBFT \\
Speculative BFT + (1) = MinZyzzyva \\
MinBFT + (2.1) = MinAggre \\
CheapBFT + (2.2)  = CheapAggre \\
MinZyzzyva + (2.2) = MinZAggre \\
MinAggre + (3.1) = MinAggreTree  \\
CheapAggre + (3.1)= \BFT \\
MinZAggre + (3.1) = MinZAggreTree  \\
Classic BFT + (2.3) + (3.1) = CoSi \\}
}
\end{SCfigure}
\begin{figure}[t]
	\centering
        \centerline{\includegraphics[scale=0.57]{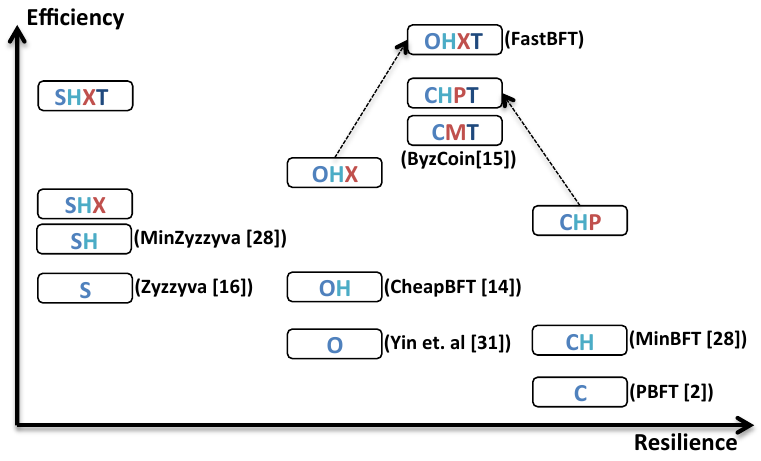}}
        \caption{Comparison between different combinations.}
        \label{fig:comparisons}
\end{figure}
}

%!TEX root = ../submission.tex

\begin{figure}[b]
  \centering
   \includegraphics*[width=0.77\linewidth]{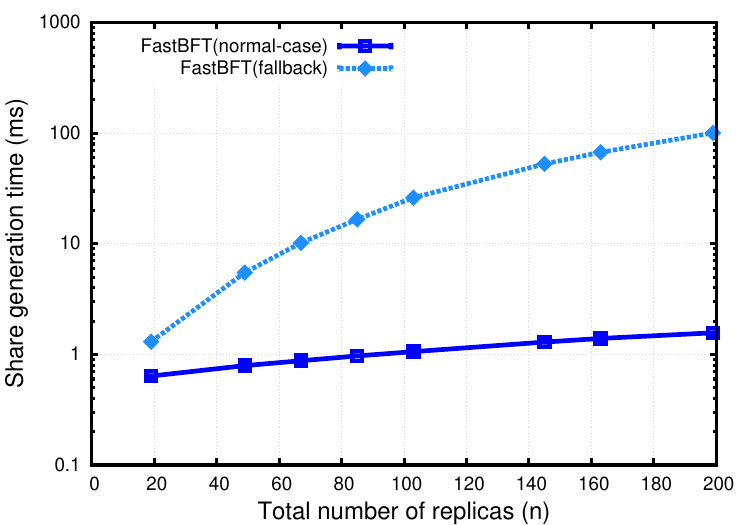}
   \caption{Cost of pre-processing vs. number of replicas ($n$)}
   \label{fig:preprocessing}
\end{figure}

\begin{figure}[t]
  \centering
   \includegraphics*[width=0.77\linewidth]{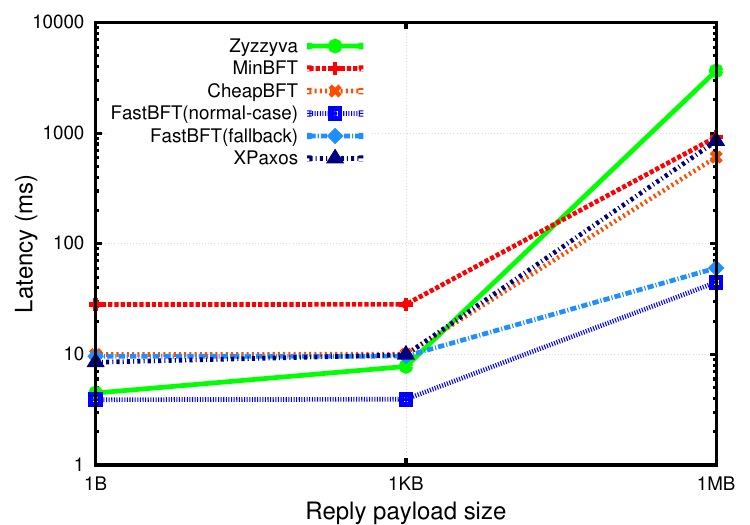}
   \caption{Latency vs. payload size.}
   \label{fig:payloads}
   \vspace{-0.3 em}
\end{figure}

\begin{figure*}[tb]
  \centering
      \subfigure[Peak throughput vs. $n$.
     ]{
          \label{fig:9}
          \includegraphics*[width=0.32\linewidth]{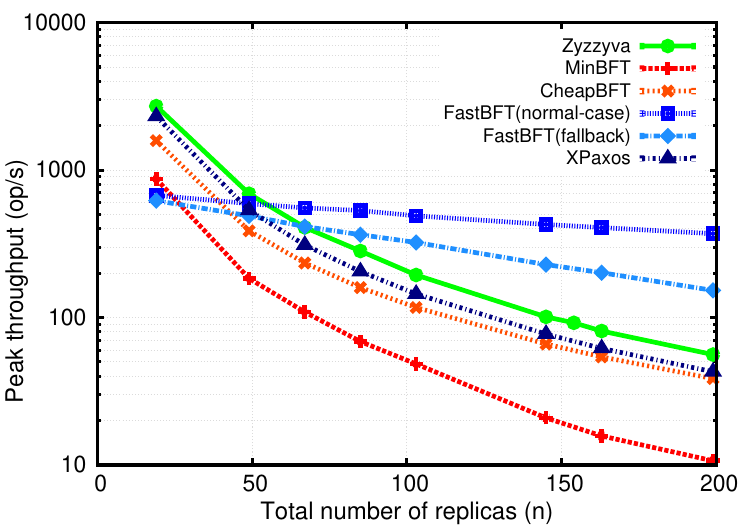}}
     \subfigure[Peak throughput vs. $f$.
     ]{
          \label{fig:2}
          \includegraphics*[width=0.32\linewidth]{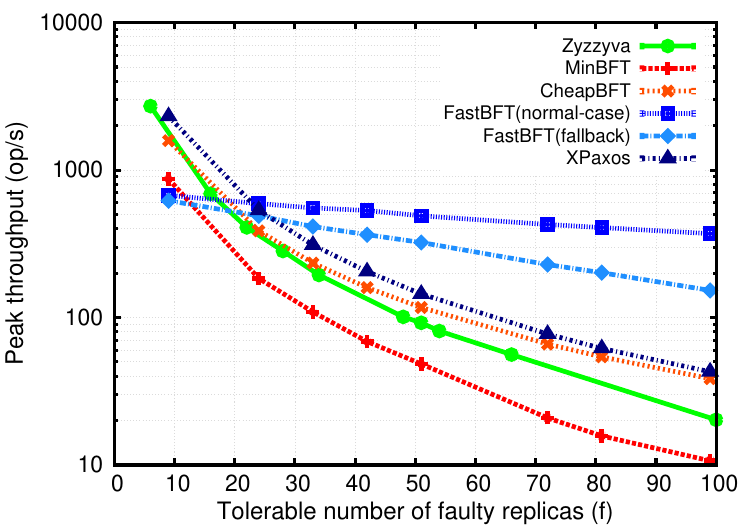}}
        %   \hspace{.01in}
     \subfigure[Latency vs. $f$.
     ]{
          \label{fig:1}
          \includegraphics*[width=0.32\linewidth]{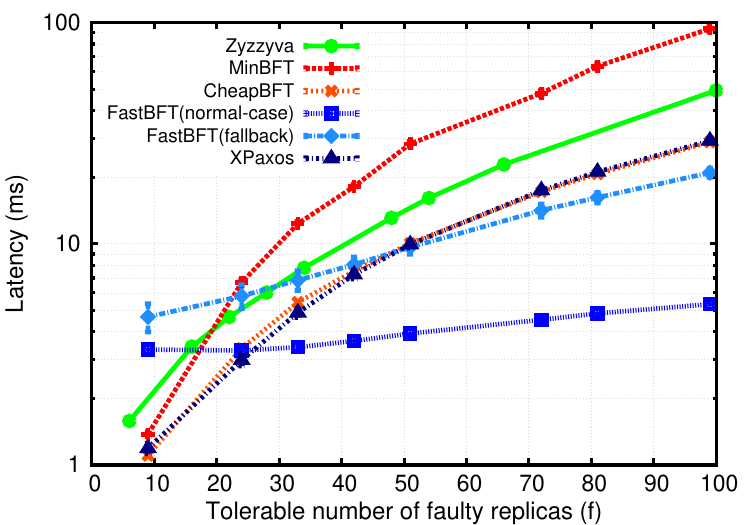}}
    \remove{ \subfigure[Latency vs. throughput for  $f$=$51$.
     ]{
          \label{fig:3}
          \includegraphics*[width=0.32\linewidth]{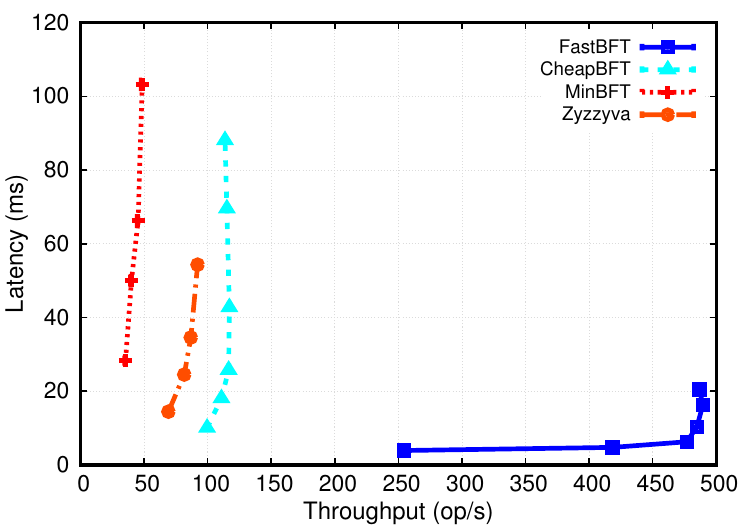}}}

     \caption{Evaluation results for 1 KB payload.}
    \label{fig:results}
\end{figure*}

\section{Evaluation}
\label{sec:performance}

In this section, we implement \BFT, \changed{emulating both the normal-case (cf. Section~\ref{sec:normal_case}) and the fallback protocol (cf. Section~\ref{sec:minBFT}),} and compare their performance with Zyzzyva~\cite{Zyzzyva}, MinBFT~\cite{MinBFT},  CheapBFT~\cite{CheapBFT} and \ICchanged{XPaxos~\cite{XFT}}.
Noticed that the fallback protocol is considered to be the worst-case of \BFT.

\subsection{Performance Evaluation: Setup and Methodology}

Our implementation is based on Golang. We use Intel SGX to provide hardware security support and implement the TEE part of a \BFT replica as an SGX enclave.
We use SHA256 for hashing, 128-bit CMAC for MACs, and 256-bit ECDSA for client signatures.
%In the case of Zyzzyva, which does \emph{not} rely on hardware support, we implemented the same functions in software.
\changed{We set the size of the committed secret in \BFT to 128 bits and implement the monotonic counter as we described in Section~\ref{sec:counter}.}

We deployed our BFT implementations on a private network consisting of five 8 vCore Intel Xeon E3-1240 equipped with 32 GB RAM and Intel SGX.
All BFT replicas were running in separate processes. At all times, we load
balance the number of BFT \revision{\textlabel{replicas running}{E(4.1)}} on each machine; \revision{by varying the server failure threshold $f$ from 1 to 99, }we spawned a maximum of \revision{298 processes across} 5 machines. The clients were running on an 8 vCore Intel Xeon E3-1230 equipped with 16~GB RAM \revision{as multiple threads}. Each machine has 1~Gbps of bandwidth and the communication between various machines was bridged
using a 1 Gbps switch. \changed{\textlabel{This setup emulates a realistic enterprise deployment}{R2(2)}; for example IBM plans the deployment of their
blockchain platform within a large internal cluster~\cite{IBMblockchain}}, 
serving mutually distrustful parties (e.g., a consortium of banks using a cloud service for running a permissioned blockchain).

Each client invokes operation in a closed loop, i.e., each client may have at most one pending operation.
\revision{\textlabel{The latency of an operation is measured}{E(4.2)} as the time when a request is issued until the replicas' replies are accepted; and we define the throughput as the number of operations that can be handled by the system in one second.}
We evaluate the peak throughput with respect to the server failure threshold $f$. We also evaluate the latency incurred in the investigated BFT protocols with respect to the attained throughput.
%We measure peak throughput as follows. 
We require that the clients issue back to back requests\revision{, i.e., \textlabel{a client issues the next request}{E(4.3)} as soon as the replies of the previous one have been accepted. }We then increase \revision{the concurrency by increasing} the number of clients in the system until
the aggregated throughput attained by all requests is saturated.
\revision{\textlabel{In our experiments, we vary the number of concurrent clients}{E(4.4)} from 1 to 10 to measure the latency and find the peak throughput.}
%The peak throughput is then computed as the maximum aggregated size of requests (in bytes) that can be handled per second.
Note that each data point in our plots is averaged over 1,500 different measurements; where appropriate, we include the corresponding 95\% confidence intervals.
%\changed{Like all previous work on BFT, we only consider the common case performance,
%which is reasonable since in most BFT scenarios, e.g., consensus for financial institutions, failure (especially Byzantine failure) rarely occurs.}
%\jianliu{We can remove this text once we get the results of worst case performance.}

\subsection{Performance Evaluation: Results}

\begin{figure*}[tb]
  \centering
        \subfigure[Peak throughput vs. $n$.
     ]{
          \label{fig:10}
          \includegraphics*[width=0.32\linewidth]{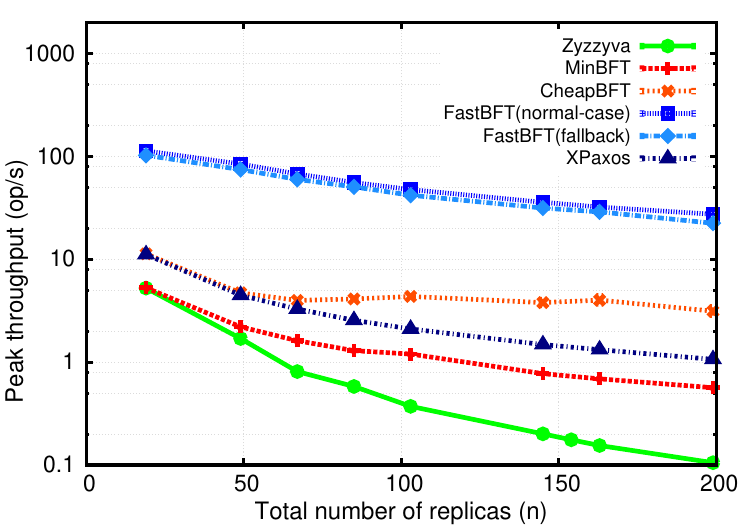}}
     \subfigure[Peak throughput vs. $f$.
     ]{
          \label{fig:5}
          \includegraphics*[width=0.32\linewidth]{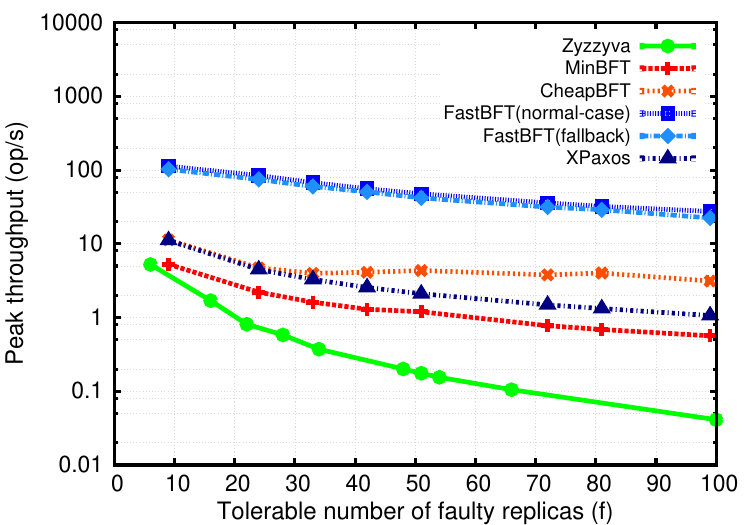}}
     \subfigure[Latency vs. $f$.
     ]{
         \label{fig:4}
         \includegraphics*[width=0.32\linewidth]{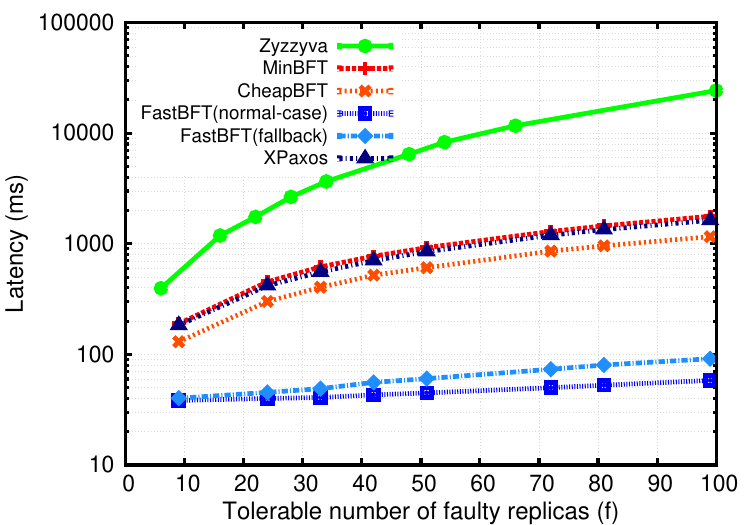}}
      %        \hspace{.01in}
     %          \hspace{.01in}
  \remove{\subfigure[Latency vs. throughput for $f$=$51$.
     ]{
          \label{fig:6}
          \includegraphics*[width=0.32\linewidth]{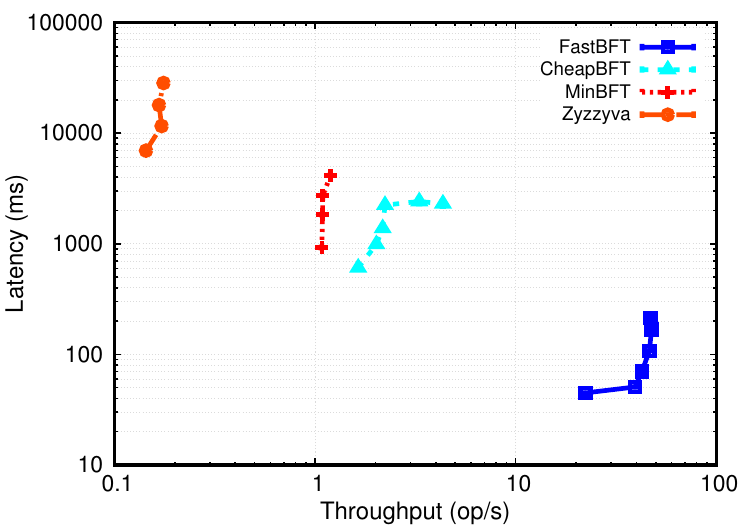}}}

     \caption{Evaluation results for 1 MB payload.}
    \label{fig:results}
\end{figure*}

%\fig{fig:results} summarizes our evaluation results. We start by measuring the preprocessing performance in \BFT.
%We began our experiment by measuring the preprocessing time for \BFT.

\noindent\textbf{Pre-processing time.} \fig{fig:preprocessing} depicts the CPU time vs. number of replicas ($n$) measured when generating shares for one secret.
Our results show that in the normal case, TEE only spends about 0.6~ms to generate additive shares for 20 replicas; this time increases linearly as $n$ increases (e.g., 1.6~ms for 200 replicas).
This implies that it only takes several seconds to generate secrets for thousands of counters (queries).
We therefore argue that the preprocessing will not create a bottleneck for \BFT.
\changed{In the case of the fallback variant of \BFT, the share generation time (of Shamir secret shares) increases significantly as $n$ increases, 
since the process involves $n\cdot f$ modulo multiplications. 
Our results show that it takes approximately 100~ms to generate shares for 200 replicas. }
Next, we evaluate the online performance of \BFT.

\noindent \textbf{Impact of reply payload size.} We start by evaluating the latency vs. payload size (ranging from 1~byte to 1MB).
We set $n=103$ (which corresponds to our default network size).
\fig{fig:payloads} shows that \BFT achieves the lowest latency for all payload sizes.
For instance, to answer a request with 1~KB payload, \BFT requires 4 ms, which is twice as fast as Zyzzyva.
Our findings also suggest that the latency is mainly affected by payload sizes that are larger than 1~KB (e.g., 1~MB).
We speculate that this effect is caused by the overhead of transmitting large payloads. % in the system.
Based on this observation, we proceed to evaluate online performance for payload sizes of 1~KB and 1~MB respectively.
The payload size plays an important role in determining the effective transactional throughput of a system.
For instance, Bitcoin's consensus requires 600 seconds on average, but since payload size (block size) is 1~MB, Bitcoin can achieve a peak throughput of 7 transactions per second
(each Bitcoin transaction is \mbox{250 bytes on average).}

\noindent \textbf{Performance for 1KB reply payload.}
\fig{fig:9} depicts the peak throughput vs. $n$ for 1~KB payload.
\BFT's performance is modest when compared to other protocols when $n$ is small.
While the performance of these latter protocols degrades significantly as $n$ increases, \BFT's performance is marginally affected.
For example, when $n=199$, \BFT achieves a peak throughput of 370 operations per second when compared to 56, 38, \ICchanged{42} op/s for Zyzzyva, CheapBFT \ICchanged{and XPaxos} respectively.
\changed{Even in the fallback case, \BFT achieves almost 152 op/s when $n=199$ and outperforms the remaining protocols.}
Notice that comparing performance with respect to $n$ does not provide a fair basis to compare BFT protocols with and without hardware assistance.
For instance, when $n=103$, Zyzzyva can only tolerate at most $f=34$ faults, while \BFT, CheapBFT, and MinBFT can tolerate $f=51$.
We thus investigate how performance varies with the maximum number of tolerable faults in Figs.~\ref{fig:2} and~\ref{fig:1}. %and~\ref{fig:3}.
%\fig{fig:3} evaluates the latency vs. throughput for 1~KB payload.
In terms of the peak throughput vs. $f$, the gap between \BFT and Zyzzyva is even larger.
For example, when $f=51$, it achieves a peak throughput of 490 operations per second, which is 5 times larger than Zyzzyva.
In general, \BFT achieves the highest throughput while exhibiting the lowest average latency per operation \changed{when $f>24$. The competitive advantage of \BFT (and its fallback variant) is even more pronounced as $f$ increases.
Although \BFT-fallback achieves comparable latency to CheapBFT, it achieves a considerably higher peak throughput. For example, when $f=51$, \BFT-fallback reaches 320 op/s when compared to 110 op/s for CheapBFT. This is due to the fact that \BFT exhibits considerably less communication complexity than CheapBFT.} 
\ICchanged{Furthermore, we emphasize that XPaxos~\cite{XFT} provides comparable performance to Paxos. 
So we conclude that \BFT even outperforms the crash fault-tolerant schemes.}

\noindent \textbf{Performance for 1MB reply payload.}
The superior performance of \BFT becomes more pronounced as the payload size increases since \BFT incurs very low communication overhead.
\fig{fig:10} shows that for 1MB payload, the peak throughput of \BFT outperforms the others even for small $n$,
and the gap keeps increasing as $n$ increases (260 times faster than Zyzzyva when $n=199$).
Figure~\ref{fig:5} and \ref{fig:4} show the same pattern as in the 1KB case when comparing \BFT and Zyzzyva for a given $f$ value.
\changed{We also notice that all other protocols beside \BFT exhibit significant performance deterioration when the payload size increases to 1~MB. 
For instance, when the system comprises 200 replicas, a client needs to wait for at least 100 replies (each 1MB in size) in MinBFT, CheapBFT \ICchanged{and XPaxos}, and 200 replies amounting to 200~MB in Zyzzyva. %This also increases the communication overhead on the replicas in transmitting this information. 
\BFT overcomes this limitation by requiring only the primary node to reply to the client.}
\revision{\textlabel{An alternative way to overcome this limitation}{R2(13)} is having the client specifies a single replica to return a full response. Other replicas only return a digest of the response. 
This optimisation affects the resilience when the designated replica is faulty.
Nevertheless, we still measured the response latencies of protocols with this optimisation and the results are shown in Figure~\ref{fig:single_full}.
The performance of \BFT remains the same since it only returns one value to the client. 
Even through the performance of other protocols have been significantly improved, \BFT (normal-case) still outperforms others.}

\begin{figure}[htbp]
  \centering
   \includegraphics*[width=0.77\linewidth]{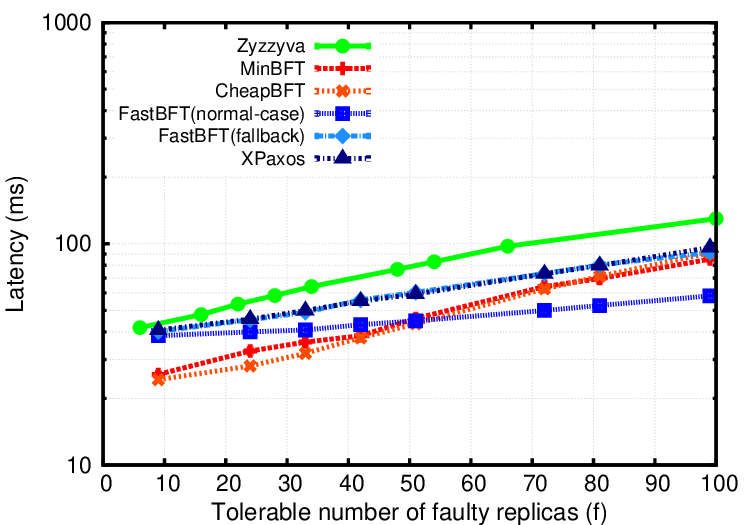}
   \caption{Latency vs.  $f$ (with single full-response)}
   \label{fig:single_full}
\end{figure}

%
%
%This disadvantage will not only be limited by the client, but also the replicas, as the system does not serve for only one client, but rather aims to serve possibly thousands of requests per second. Though it is fast for the replicas to reach consensus with each other, it is difficult for any single replica to send back replies with the speed of several 1GB/s. Now thanks to the reply aggregation in \BFT, the workload of the system and the client is greatly relieved. Only the primary node sends the reply to the client, along with the proof of consensus on the reply message. In addition, we do not require the client to acquire in advance the certificates (or shared credentials) of all the other replicas except the primary which is the entry node, which gives extra privacy to the internal structure of the system.}

Assuming that each payload comprises transactions of 250 bytes (similar to Bitcoin), \BFT can process a maximum of 113,246 transactions per second in a network of around 199 replicas.

Our results confirm our conjectures in Section~\ref{sec:discussion}: \BFT strikes a strong balance between performance and resilience.

\subsection{Security Considerations}

\ICchangedagain{\noindent \textbf{TEE usage.}
%In our system model (Section 3) we assumed that TEEs may only crash (whereas non-TEE software in faulty replicas may be Byzantine). 
Since we assumed that TEEs may only crash (cf. system model in Section 3), 
a naive approach to implement a BFT protocol is to simply run a crash fault-tolerant variant (e.g., Paxos) within TEEs.}
\ICchanged{
However,  running large/complex code within TEEs increases the risk of vulnerabilities in the TEE code. 
The usual design pattern is to partition a complex application so that only a minimal, critical part runs within TEEs. 
Previous work (e.g., MinBFT, CheapBFT) showed that using minimal TEE functionality (maintaining an monotonic counter) improves the performance of BFT schemes. 
FastBFT presents a different way of leveraging TEEs that leads to significant performance improvements by slightly increasing the complexity of TEE functionality. 
FastBFT's TEE code has 7 interface primitives and 1,042 lines of code \revision{\textlabel{(47 lines of code are for SGX SDK)}{R1(6)}}; In comparison, MinBFT uses 2 interface functions and 191 lines \revision{(13 lines of code are for SGX SDK)} of code in our implementation. 
Both are small enough to make formal/informal verification as needed, \revision{ever though FastBFT places more functionality in the TEE than just a counter}.
In contrast, Paxos (based on LibPaxos~\cite{LibPaxos}) requires more than 4,000 lines of code.}

\noindent\ICchangedagain{\textbf{TEE side-channels.}}
\ICchanged{
SGX enclave code that deals with sensitive information must use side-channel resistant algorithms to process them~\cite{sidechannel}. 
However, the only sensitive information in FastBFT are cryptographic keys/secret-shares which are processed by standard cryptographic algorithms/implementations such as the standard the SGX crypto library (libsgx\_tcrypto.a) which are side-channel resistant. 
Existing side-channel attacks are based on either the RSA public component or the RSA implementation from other libraries, which we did not use in our implementation.}

%~\ref{fig:4}, and~\ref{fig:5}. and~\ref{fig:6}
%In this case, \BFT achieves a peak throughput of 47 operations per second, which is 10 times larger than CheapBFT.

%!TEX root = ../submission.tex
\section{Related Work}
\label{sec:related_work}

%Instead of relying on the weak synchrony assumption, using randomization to design probabilistic algorithms is another way to circumvent the FLP result.

%We have introduced enough related work in Section~\ref{sec:Background}~and ~\ref{sec:discussion}.
%However, the following work still worth to be mentioned.

{\em Randomized Byzantine consensus} protocols have been proposed in 1980s~\cite{Ben1983, Rabin1983}.
Such protocols rely on cryptographic coin tossing and expect to complete in $O(k)$ rounds with probability $1-2^{-k}$.
As such, randomized Byzantine protocols typically result in high communication and time complexities.
\changedagain{In this paper, we therefore focus on the efficient deterministic variants.}
Honeybadger~\cite{honeybadger} is a recent randomized Byzantine protocol that provides comparable throughput to PBFT.
%However, it relies on large payload size. {\color{red}and why is this a problem?}
%By combining several primitives such as reliable broadcast, binary agreement, and threshold encryption,
%Honeybadger achieves a communication complexity of $O(n^2B+\lambda n^3\log n)$, where $B$ is the batch size.
%{\color{red}please make the last sentence clearer} So a large batch size $B = O(\lambda n^2\log n)$ is required to absorb the additive overhead.

Liu et al. observed that Byzantine faults are usually independent of asynchrony~\cite{XFT}.
Leveraging this observation, they introduced a new model, \emph{XFT}, which allows designing protocols that tolerate crash faults in weak synchronous networks and, meanwhile, tolerates Byzantine faults in synchronous network.
Following this model, the authors presented XPaxos, an optimistic
state machine replication, that requires $n=2f+1$ replicas to tolerate $f$ faults.
%XPaxos follows the optimistic paradigm where $f+1$ active replicas run the agreement protocol and the other $f$ passive replies update their states.
However, XPaxos still requires all-to-all multicast in the agreement stage---thus resulting \mbox{in $O(n^2)$ message complexity}. %is still $O(n^2)$.

\BFT's message aggregation technique is similar to the {\em proof of writing} technique introduced in PowerStore~\cite{powerstore} which implements a read/write storage abstraction.
Proof of writing is a 2-round write procedure: the writer first commits to a random value, and then opens the commitment to ``prove'' that the first round has been completed.
The commitment can be implemented using cryptographic hashes or polynomial evaluation---thus removing the need for public-key operations.

\ICchanged{Hybster~\cite{Behl2017} is a TEE-based BFT protocol that leverages parallelization to improve performance, which is orthogonal to our contribution.}

%Sieve~\cite{sieve}

%!TEX root = ../submission.tex
\section{Conclusion and Future Work}
\label{sec:conclusion}

In this paper, we presented a new BFT protocol, \BFT. We analyzed and evaluated our proposal in comparison to existing BFT variants.
Our results show that \BFT %achieves  370 operations per second when the network comprises 200 nodes), which
is 6 times faster than Zyzzyva.
Since Zyzzyva reduces replicas' overheads to near their theoretical minima, we argue that \BFT achieves near-optimal efficiency for BFT protocols. % \ghassan{FastBFT is not optimal. one can always improve it. near-optimal?}
Moreover, \BFT exhibits considerably slower decline in the achieved throughput as the network size grows when compared to other BFT protocols. %The number of \BFT's throughput  slower  in throughput is slow as the number of nodes increase.
% {\color{red}Our implementation results show that \BFT can reach a peak throughput of Y operations per second, which is only Z\% slower than X.}
This makes \BFT an ideal consensus layer candidate for \mbox{next-generation blockchain systems}.

We assume that TEEs are equipped with certified keypairs (Section~\ref{subsec:TEE}). Certification is typically done by the TEE manufacturer, but can also be done by any trusted party when the system is initialized. Although our implementation uses Intel SGX for hardware support, \BFT can be realized on any standard TEE platform (e.g., GlobalPlatform~\cite{GP}).

We plan to explore the impact of other topologies, besides trees, on the performance of \BFT. This will enable us to reason on
optimal (or near-optimal) topologies that suit a particular network size in \BFT.

%%%%%%%% Acknowledgment %%%%%%%
%!TEX root = ../submission.tex
\ifsubmission
\else
\section*{Acknowledgments}
The work was supported in part by a grant from NEC Labs Europe as well as funding from the Academy of Finland (BCon project, grant \#309195).
\fi
\label{sec:ack}

\raggedright
\bibliographystyle{IEEEtranS}
\bibliography{sigproc}

%\ifsubmission
%\else
%\input{chapters/appendix}
%\fi

% biography section
% 
% If you have an EPS/PDF photo (graphicx package needed) extra braces are
% needed around the contents of the optional argument to biography to prevent
% the LaTeX parser from getting confused when it sees the complicated
% \includegraphics command within an optional argument. (You could create
% your own custom macro containing the \includegraphics command to make things
% simpler here.)
%\begin{IEEEbiography}[{\includegraphics[width=1in,height=1.25in,clip,keepaspectratio]{mshell}}]{Michael Shell}
% or if you just want to reserve a space for a photo:

\begin{IEEEbiography}[{\includegraphics[width=1in,height=1.25in,clip,keepaspectratio]{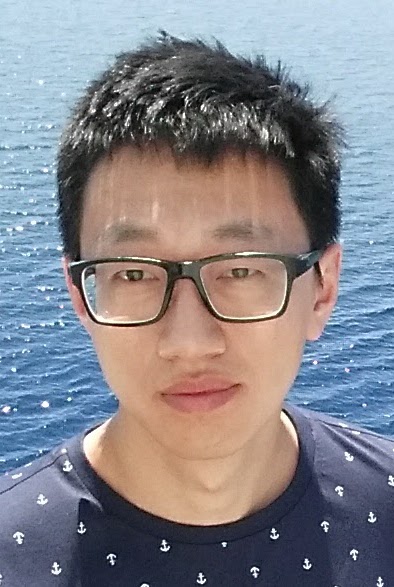}}]{Jian Liu}
is a Doctoral Candidate at Aalto University, Finland. 
He received his Masters of Science in University of Helsinki in 2014. 
He is instructed in applied cryptography and blockchains.
\end{IEEEbiography}

\begin{IEEEbiography}[{\includegraphics[width=1in,height=1.25in,clip,keepaspectratio]{./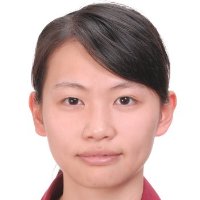}}]{Wenting~Li}
is a Senior Software Developer at NEC Laboratories Europe. She received her Masters of Engineering in Communication System Security from Telecom ParisTech in September 2011. She is interested in security with a focus on distributed system and IoT devices.
\end{IEEEbiography}

\begin{IEEEbiography}[{\includegraphics[width=1in,height=1.25in,clip,keepaspectratio]{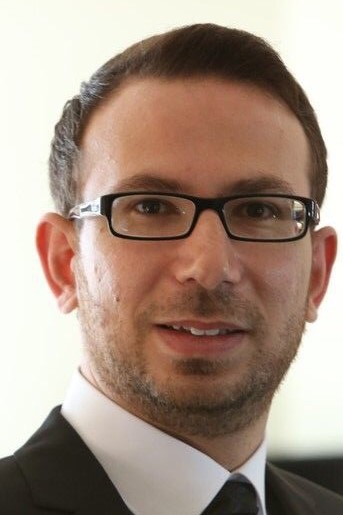}}]{Ghassan Karame}
is a Manager and Chief researcher of Security Group of NEC Laboratories Europe.
He received his Masters of Science from Carnegie Mellon University (CMU) in December 2006, 
and his PhD from ETH Zurich, Switzerland, in 2011. 
Until 2012, he worked as a postdoctoral researcher in ETH Zurich.
He is interested in all aspects of security and privacy with a focus on cloud security,
SDN/network security and Bitcoin security. He is a member of the IEEE and of the ACM. 
More information on his research at \url{http://ghassankarame.com/}.
\end{IEEEbiography}

\begin{IEEEbiography}[{\includegraphics[width=1in,height=1.25in,clip,keepaspectratio]{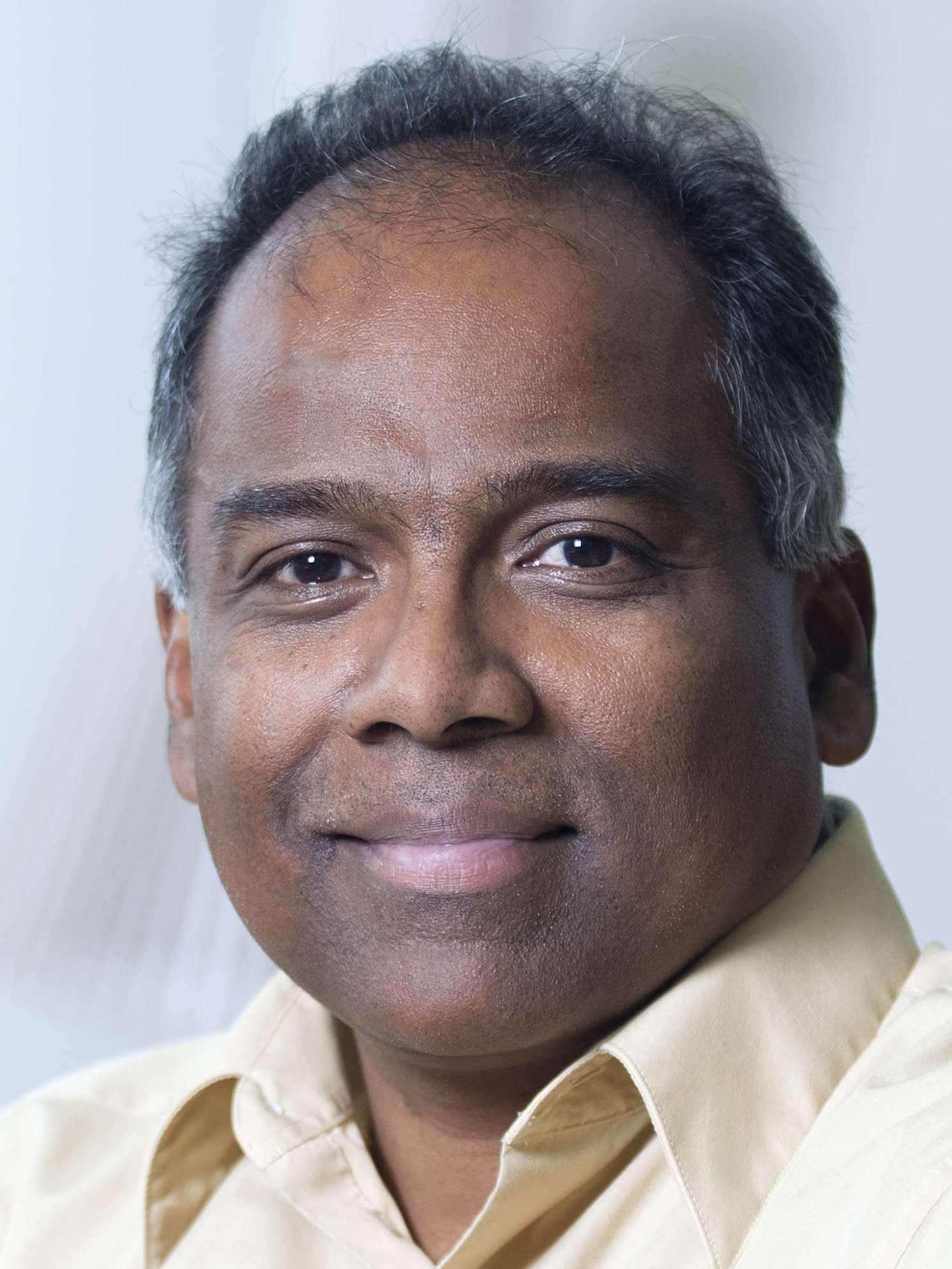}}]{N. Asokan} is a~Professor~of~Computer Science at Aalto University~where~he co-leads the secure systems~research group and directs 
Helsinki-Aalto~Center for Information Security -- HAIC.
%~Before joining academia, he spent 17~years~in industry research labs with IBM~Research and Nokia Research Center.~Asokan received his formal education~from University of Waterloo, Syracuse University, and Indian Institute of Technology, Kharagpur. 
More information on his research at \url{http://asokan.org/asokan/}.
\end{IEEEbiography}

\end{document}